\newtheorem{theorem}{Theorem}
\newtheorem{definition}{Definition}
\def\BibTeX{{\rm B\kern-.05em{\sc i\kern-.025em b}\kern-.08em
    T\kern-.1667em\lower.7ex\hbox{E}\kern-.125emX}}
\begin{document}

\title{DPBalance: Efficient and Fair Privacy Budget Scheduling for Federated Learning as a Service\\

\thanks{This work is supported by the National Natural Science Foundation of China (Grant No. 62302292), the National Key R\&D Program of China (Grant No. 2023YFB2704400), and the Fundamental Research Funds for the Central Universities.
The work of C. Chen is supported by National Natural Science Foundation of China (Grant No. 62202300) and Shanghai Pujiang Program (Grant No. 22PJ1404600).
}
}

\author{\IEEEauthorblockN{Yu Liu\IEEEauthorrefmark{1},
Zibo Wang\IEEEauthorrefmark{1},
Yifei Zhu\IEEEauthorrefmark{1}\IEEEauthorrefmark{3}, and Chen Chen\IEEEauthorrefmark{4}}
\IEEEauthorblockA{\IEEEauthorrefmark{1}UM-SJTU Joint Institute, Shanghai Jiao Tong University}
\IEEEauthorblockA{\IEEEauthorrefmark{3}Cooperative Medianet Innovation Center (CMIC), Shanghai Jiao Tong University}

\IEEEauthorblockA{\IEEEauthorrefmark{4}School of Electronic Information and Electrical Engineering, Shanghai Jiao Tong University}
\IEEEauthorblockA{
Email: yu\_liu@sjtu.edu.cn, wangzibo@sjtu.edu.cn, yifei.zhu@sjtu.edu.cn, chen-chen@sjtu.edu.cn}
}
\maketitle

\begin{abstract}
Federated learning (FL) has emerged as a prevalent distributed machine learning scheme that enables collaborative model training without aggregating raw data. Cloud service providers further embrace Federated Learning as a Service (FLaaS), allowing data analysts to execute their FL training pipelines over differentially-protected data. Due to the intrinsic properties of differential privacy, the enforced privacy level on data blocks can be viewed as a privacy budget that requires careful scheduling to cater to diverse training pipelines. Existing privacy budget scheduling studies prioritize either efficiency or fairness individually. In this paper, we propose DPBalance, a novel privacy budget scheduling mechanism that jointly optimizes both efficiency and fairness. We first develop a comprehensive utility function incorporating data analyst-level dominant shares and FL-specific performance metrics. A sequential allocation mechanism is then designed using the Lagrange multiplier method and effective greedy heuristics. We theoretically prove that DPBalance satisfies Pareto Efficiency, Sharing Incentive, Envy-Freeness, and Weak Strategy Proofness. We also theoretically prove the existence of a fairness-efficiency tradeoff in privacy budgeting. Extensive experiments demonstrate that DPBalance outperforms state-of-the-art solutions, achieving an average efficiency improvement of $1.44\times \sim 3.49 \times$, and an average fairness improvement of $1.37\times \sim 24.32 \times$.

\end{abstract}

\begin{IEEEkeywords}
Differential privacy, federated learning, resource scheduling, privacy budget, fairness, efficiency
\end{IEEEkeywords}

\section{Introduction}
The proliferation of data on edge devices and the advancement of machine learning models drive the wide deployment of machine learning services. However, existing model training paradigm has raised severe privacy concerns since sensitive information needs to be uploaded to central servers to be analyzed. In response to this, privacy protection laws like the General Data Protection Regulation of EU (GDPR) \cite{GDPR_2018EU} have been established worldwide, which further drives the emergence of federated learning (FL). FL is a privacy-preserving distributed machine learning scheme where edge devices collaboratively train a global model while keeping their data local. It has been widely studied and deployed nowadays in applications such as Google keyboard \cite{chen2019federated, ramaswamy2019federated, tim2018applied}, hotword detection \cite{leroy2019federated}, and medical research \cite{de2019federated}.

With the advent of FL techniques, there has been a notable rise in distributed machine learning platforms like FedML \cite{he2020fedml}, FATE \cite{FATE2019} that offer a cloud-based ``federated learning as a service'' (FLaaS) \cite{FLaaS2020Nicolas}. With the lifetime of a federated model training process being represented as a pipeline, these platforms facilitate training of diverse pipelines from data analysts using crowdsourced data owners. 
To further protect and manage privacy loss, differential privacy (DP) is adopted in the FLaaS platform \cite{abadi2016deep}\cite{wei2020federated}. 
It ensures the output of FL pipeline would not change much for any neighbor input datasets. DP is enforced by adding random noise to intermediate data like gradient \cite{Sun21LDP} of FL training. With the help of DP mechanisms, data owners can specify a privacy level on
their continuously generated data blocks. Each data analyst submits multiple FL pipelines with different training demands on the amount of data and its privacy level.

According to the definition of DP, the training of a FL model on a data block introduces a certain level of privacy loss. Furthermore, this privacy loss is additive when data is utilized for multiple times. As a result, privacy in FLaaS platforms can be perceived as a consumable and quantifiable resource. The privacy level set by each crowdsourced data owner can be regarded as a privacy resource budget. Consequently, the efficient scheduling of FL pipelines to train on crowdsourced private data within the privacy budget becomes a critical challenge for FLaaS platforms.

However, while privacy can be treated as a resource, traditional resource allocation methods in distributed computing cannot be directly applied.
The primary distinction lies in the nature of privacy resources, which are characterized as \textit{non-replenishable} and \textit{continuously generated} as new data is collected.  Unlike traditional resources such as CPU and memory which can be released for subsequent tasks after prior usage, privacy resources are not recoverable once their budget has been depleted. Furthermore, allocating available resources solely based on demand, a strategy that may work for CPU/ memory allocation, may be however inefficient or unfair for privacy resources.

Existing works in privacy budget scheduling solely consider either efficiency \cite{ghodsi2011dominant, li2022dplanner, tholoniat2022packing, yuan2022Privacyas, kuchler2023cohere} or fairness \cite{luo2021privacy}. So far there is no privacy budget scheduling mechanism that jointly considers them both. In fact, efficiency and fairness of an allocation scheme are complicatedly affected by many factors.
The FL platform may hope to let the pipeline await to obtain a boarder view of the allocation market to achieve a better performance in system throughput and fairness. However, requiring the pipelines to wait for a long time increases the latency, which hurts the final efficiency from a different perspective.
In addition, naively allocating the majority of resources to those low-demand pipelines to maximize the number of completed pipelines may make the allocation among data analysts imbalanced, which hurts fairness.
Therefore, it is challenging to derive an allocation comprehensively taking efficiency and fairness into consideration in the context of FLaaS platform.

In this paper, we present the first privacy budget scheduling mechanism, DPBalance, that jointly considers efficiency and fairness. 
We first develop a comprehensive utility function incorporating dominant shares across different FL pipelines to calculate data analyst-level fairness and platform-level efficiency. To capture the characteristics of FLaaS training, FL-specific performance metrics, such as the matching degree of data and FL models, are further integrated into the utility model. A sequential allocation mechanism is then designed using the Lagrange multiplier method and effective greedy heuristics. 
From the perspective of fairness, 
we theoretically prove that our solution can simultaneously satisfy four key economic properties. From the perspective of efficiency, the most efficient allocation result under the privacy preference is derived by maximizing platform's utility function. 
In summary, the contributions of our work are as follows. 

\begin{itemize}
\item We present the first privacy budget scheduling mechanism, DPBalance, that jointly considers efficiency and fairness in the context of FLaaS model training.
\item We design a sequential allocation algorithm using the Lagrange multiplier method and effective greedy heuristics so that data analyst-level fairness and platform-level efficiency can be maximized. 

\item We theoretically prove that DPBalance satisfies four essential economic properties: Pareto Efficiency, Sharing Incentive, Envy-Freeness, and Weak Strategy Proofness.
\item We discover and theoretically prove the existence of a tradeoff between fairness and efficiency under practical conditions.
\item Extensive experiments show that DPBalance outperforms other state-of-the-art budget scheduling baselines by $1.44 \times \sim 3.49 \times$ in efficiency and $1.37 \times \sim 24.32 \times$ in fairness on average.

\end{itemize}

The remainder of this paper is constructed as follows: Section \ref{Related work} summarizes the related work. Background and motivation are presented in Section \ref{Background and motivation}. In Section \ref{System model and problem formulation}, we present the system model and problem formulation of privacy budget scheduling. Algorithm design and theoretical analysis are presented in Section \ref{Algorithm design and theoretical analysis}. The evaluation part is presented in Section \ref{Evaluation}, followed by the conclusion in Section \ref{Conclusion}.

\section{Related work}
\label{Related work}

\subsection{Resource allocation in FL}
Existing resource allocation studies in FL mainly focus on allocating traditional resources, like energy consumption \cite{dinh2020federated}, bandwidth \cite{shi2020joint} or computation resources \cite{li2019fair, lim2021decentralized} to train a single model. In FEDL \cite{dinh2020federated}, the uplink transmission rate is allocated to minimize energy cost and training time. In \cite{shi2020joint}, a joint device scheduling and resource allocation policy is proposed to maximize model accuracy with a limited training time budget and latency threshold. 
In q-FEL \cite{li2019fair}, fairness in accuracy across different users is considered when allocating edge devices. 
In \cite{lim2021decentralized}, a deep learning-based auction mechanism is proposed to infer the valuation of each cluster head's service to allocate clusters' computation resources. However, none of these works consider privacy as resources and study privacy allocation problems in the emerging FLaaS setting, where multiple FL models need to be trained on top of privacy-preserved data owners. 

\subsection{Privacy budgeting for distributed model training}

Current studies on privacy budget scheduling can mainly be categorized into two types: efficiency-oriented \cite{ghodsi2011dominant, li2022dplanner,  tholoniat2022packing, yuan2022Privacyas, kuchler2023cohere } and fairness-oriented approaches \cite{luo2021privacy}.
In \cite{li2022dplanner}, privacy budget is allocated to maximize the overall quality of query results with an online budgeting algorithm. In \cite{tholoniat2022packing, yuan2022Privacyas, kuchler2023cohere}, privacy budget allocation problem is formulated to maximize the number of allocated pipelines, and adopt heuristic algorithms \cite{tholoniat2022packing, yuan2022Privacyas}, or solver \cite{kuchler2023cohere} to resolve it. 
DPF \cite{luo2021privacy} focuses on the fairness in allocation and is designed to ensure first $N$ pipelines' max-min fairness at the pipeline level.
Different from these studies, we propose the first work to consider both fairness and efficiency in privacy budget scheduling.  We further present a comprehensive comparison of our work to others in Fig. \ref{Comparison of privacy scheduling}.

\begin{figure}[t]
\centering 
\includegraphics[width=\linewidth]{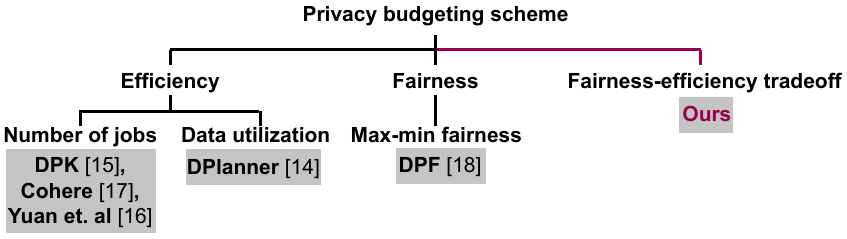}
\caption{Taxonomy of existing privacy budgeting studies and ours (DPBalance).}
\label{Comparison of privacy scheduling}
\end{figure}

\section{Background and motivation}
\label{Background and motivation}

\subsection{Why privacy can be allocated?}

In this subsection, we introduce bounded property, composition of privacy to illustrate why privacy can be regarded as resources to be allocated.

To show the bounded property, we introduce the state-of-the-art DP, R\'{e}nyi Different Privacy \cite{RENYI}\cite{mironov2017renyi}, and the resulting privacy loss from it. We first give the formal definition of R\'{e}nyi divergence:

\begin{definition}[R\'{e}nyi Divergence]
\label{Renyi divergence} 
Let $\Gamma$ and $\Upsilon$ be two distributions, R\'{e}nyi divergence $\mathcal{RD}_{\alpha}$ of order $\alpha$ is defined as

\begin{equation}
    \mathcal{RD}_{\alpha} (\Gamma || \Upsilon) \triangleq \frac{1}{\alpha-1} \log E_{x \sim \Upsilon}(\frac{\Gamma(x)}{\Upsilon(x)}),
\end{equation}
where $\Gamma(x)$ and $\Upsilon(x)$ are densities of $\Gamma$ and $\Upsilon$ at point $x$.

\end{definition}
 
Then, R\'{e}nyi Different Privacy is defined as follows:

\begin{definition}[($\alpha, \epsilon$)-R\'{e}nyi Different Privacy (RDP)]

Let $M$ be the random mechanism processing on two adjacent datasets $D$ and $D'$ that only differ in one record. ($\alpha, \epsilon$)-R\'{e}nyi Different Privacy holds that

\begin{equation}
\label{eq3}
\begin{aligned}
    & \mathcal{RD}_{\alpha} (M(D)||M(D')) \le \epsilon,
\end{aligned}
\end{equation}
where R\'{e}nyi divergence $\mathcal{RD}_{\alpha} (M(D)||M(D'))$ is equivalent to the privacy loss of dataset $D$.
\end{definition}

We call the $\epsilon$-bounded privacy loss as the bounded property. When $\alpha$ approaches to $\infty$, ($\alpha, \epsilon$)-RDP is equivalent to the basic $(\epsilon, 0)$-DP. RDP is widely applied in the privacy budgeting solutions \cite{luo2021privacy, tholoniat2022packing, yuan2022Privacyas, kuchler2023cohere} for its advantages over basic DP: 1) it bounds privacy loss tighter when achieving similar performance in FL; 2) it permits more convenient composition of multiple different DP mechanisms including Gaussian mechanism and Laplacian mechanism, and scales better. 

As different data analyst requires different amount of data, we partition growing dataset of each FL device into data blocks based on time, namely $\mathcal{S} = \{d_k| k\in[1,\infty)\}$, where $\mathcal{S}$ is the data block set. In this way, each data analyst can specify different number of data blocks.
To make each data block reusable and accountable for privacy loss each time being used, it's essential to introduce parallel composition and sequential composition, which depict the additive property of privacy.

\begin{definition}[Parallel composition]
We denote each data block $i$'s privacy loss as $\epsilon_i$. Then, the privacy loss for the whole growing dataset is $\max \epsilon_i$, $\forall i \in [1, \infty)$.
\end{definition}

\begin{definition}[Sequential composition]
Let the total privacy budget of FL device be $\epsilon_g$. We assume it has been used $K$ times by different FL pipelines, and each time it is consumed $\epsilon_i$ privacy budget. Then, the privacy loss for this data block can be $\sum_{i=1}^K \epsilon_i$.
\end{definition}

Intuitively, parallel composition guarantees the privacy loss of dataset is bounded by the maximum privacy loss of its single privacy loss. Sequential composition depicts that privacy loss is additive for a single data block. It is straightforward to prove that these two properties hold perfectly for $(\alpha, \epsilon)$-RDP.

As privacy is limited due to the bounded property and additivity due to the composition property, it can be regarded as a special resource to be allocated.

\subsection{Effects of bad scheduling}

As mentioned, privacy can be considered as an important resource to be allocated. When the limited resource is allocated to multiple pipelines, efficiency and fairness of the allocation becomes a critical metric \cite{ghodsi2011dominant, li2019fair, uchida2009information, lan2010axiomatic, joe2013multiresource}.

Consider that data blocks are constantly contributed by edge devices and data analysts constantly coming in with their pipelines. Each pipeline demands certain data blocks with privacy demands. Privacy demand can be regarded as the amount of privacy loss of a certain data block specified by a data analyst. In our context, dominant share is defined as the maximum privacy budget among all required data blocks of a training demand from a pipeline or data analyst. Training demand includes number of data blocks and privacy demand on each data block. 

\begin{definition} [Pipeline's maximum share]

A pipeline's maximum share of all the required data block's privacy resources is defined as

\begin{equation}
    \mu_{ij} = \max_k \gamma_{ij}^{<k>},
    \label{relative dominant share}
\end{equation}
where $\gamma_{ij}^{<k>}$ is the normalized privacy demand of pipeline $j$ from data analyst $i$ for data block $k$.
\end{definition}

Thus, this pipeline's dominant share can be represented as $\mu_{ij} x_{ij}$. $x_{ij}$ is the ratio of allocated resource to privacy demand of pipeline $j$ from data analyst $i$, which can also be considered as allocated result to pipeline $j$.

\begin{definition} [Data analyst's maximum share]
Data analyst $i$'s maximum share of all the required data block's privacy resource is defined as

\begin{equation}
    \mu_{i} = \max_k \gamma_{i}^{<k>}, 
    \label{data analyst's relative dominant share}
\end{equation}
where $\gamma_{i}^{<k>}$ is the normalized privacy demand that data analyst $i$ wants for data block $k$ on all selected edge devices, namely, $\gamma_{i}^{<k>} = \sum_{j=0}^{n_i} \gamma_{ij}^{<k>} x_{ij}$.

\end{definition}

Thus, data analyst $i$'s dominant share is represented as $\mu_{i} x_{i}$, which indicates the degree to its demand was satisfied. $x_i$ is the ratio of allocated resource to privacy demand of data analyst $i$, which can be considered as allocation to data analyst $i$.

\begin{figure}[t]
\centering 
\includegraphics[width=9cm]{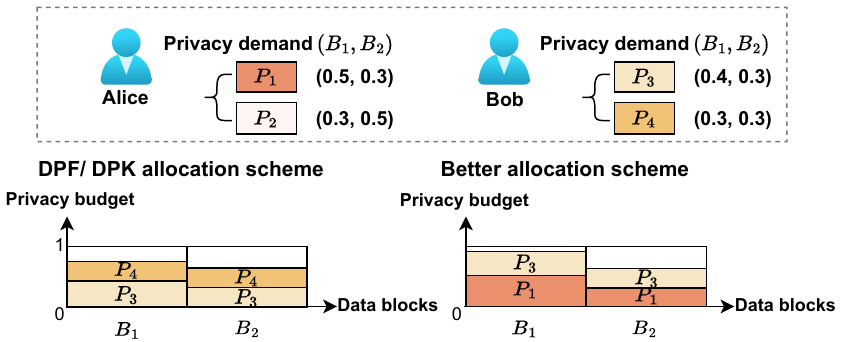}
\caption{Allocation results under different schemes for a particular time slot.}
\label{Allocation example}
\end{figure}

To illustrate the shortcomings of bad scheduling, we give an example in Fig. \ref{Allocation example}. Consider two data analysts named Alice and Bob, two data blocks $B_1, B_2$ whose total privacy budget is $1.0$. Alice has two pipelines ($P_1, P_2$). Pipeline $P_1$ demands ($0.5, 0.3$) for ($B_1, B_2$), where the normalized privacy demand ($\gamma^{<1>}, \gamma^{<2>}$) for ($B_1, B_2$) is ($0.5, 0.3$) and maximum share $\mu$ is $0.5$. 
Pipeline $P_2$ demands ($0.3, 0.5$) for ($B_1, B_2$), where the normalized privacy demand ($\gamma^{<1>}, \gamma^{<2>}$) for ($B_1, B_2$) is ($0.3, 0.5$) and maximum share $\mu$ is $0.5$.  
Bob also has two pipelines $(P_3, P_4)$.
Pipeline $P_3$ demands ($0.4, 0.3$) for ($B_1, B_2$), where the normalized privacy demand ($\gamma^{<1>}, \gamma^{<2>}$) for ($B_1, B_2$) is ($0.4, 0.3$) and maximum share $\mu$ is $0.4$.  
Pipeline $P_4$ demands ($0.3, 0.3$) for ($B_1, B_2$), where the normalized privacy demand ($\gamma^{<1>}, \gamma^{<2>}$) for ($B_1, B_2$) is ($0.3, 0.3$) and maximum share $\mu$ is $0.3$. 

In this example, DPK \cite{tholoniat2022packing}, which is the state-of-the-art privacy scheduling method focusing on efficiency and allocating resources to the pipeline with the lowest weight-to-demand ratio, allocates pipeline $P_3$ and $P_4$ shown in the left side of Fig. \ref{Allocation example}. In addition, DPF \cite{luo2021privacy}, which is the state-of-the-art privacy scheduling method focusing on fairness and allocating resources to the pipeline with the smallest dominant share, gets the same allocation result as DPK. Efficiency can be evaluated with dominant share and leftover privacy resources. Based on the definition in \eqref{relative dominant share}, the overall dominant share of DPF and DPK solution is $0.7$. From the perspective of leftover privacy resources, DPF and DPK waste $0.3\ B_1$ and $0.4\ B_2$. We can figure out another solution shown on the right side of Fig. \ref{Allocation example}, pipeline $P_1$ from Alice and pipeline $P_3$ from Bob can be satisfied at the same time. Furthermore, this allocation result leaves less unused resource, as it leaves $0.1\ B_1$ and $0.4\ B_2$ unused, and achieves $0.9$ overall dominant share. Furthermore, it's much fairer because there is no analyst who occupies all the privacy resources.

\section{System model and problem formulation}
\label{System model and problem formulation}

In this section, we first introduce the threat model and the entities involved in our framework. Then, we formally formulate the privacy resource allocation in FLaaS platforms as an optimization problem.

\subsection{Threat model}
In the FLaaS platform, the data analysts submit multiple pipelines and receive them after being trained by the private data held by vast data contributors. The trained pipelines exposed to data analysts or other entities, although does not explicitly contain the raw data, still leak sensitive information of the individual data contributors and are vulnerable to privacy threats such as membership inference attacks \cite{shokri2017membership} or data reconstruction attacks \cite{carlini2019secret}. To tackle the privacy threats, we apply the DP model, which restricts information leakage by introducing randomness to the outputs, to the trained pipelines \cite{wei2020federated}. In the FLaaS platform, one data block may be utilized in training multiple pipelines, where an adversary has an increased ability in performing attacks with an aggregated view of multiple pipelines. Therefore, we propose privacy budget scheduling mechanism in the FLaaS platform, where each data block satisfies $(\alpha,\epsilon)$-RDP over the whole execution after careful DP assignment on each pipeline.

\subsection{System overview}

\begin{figure}[t]
\centering 
\includegraphics[width=\linewidth]{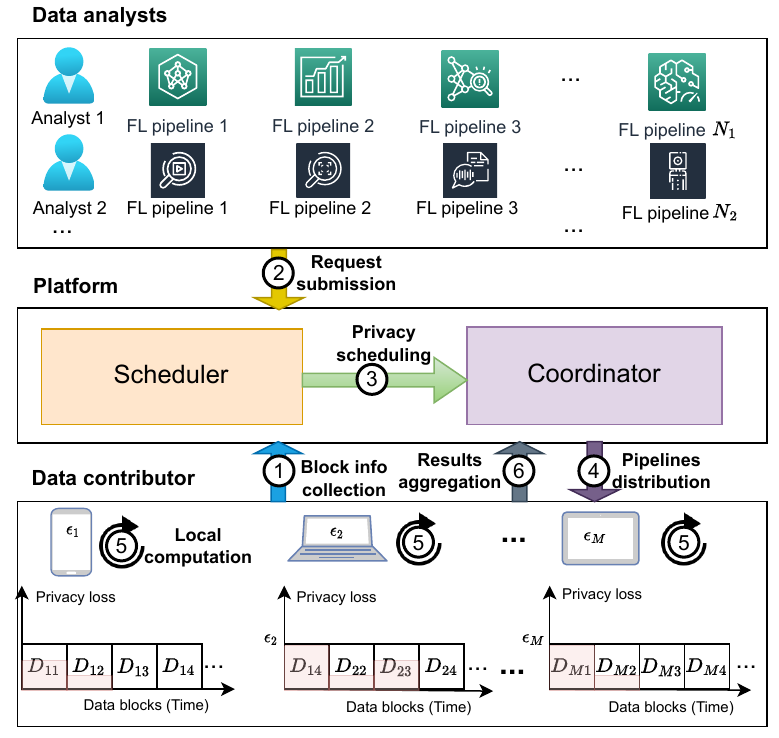}
\caption{Framework of DPBalance}
\label{framework}
\end{figure}

The framework of DPBalance is depicted in Fig. \ref{framework} and it considers three entities: Data analysts, FLaaS platform, and data contributors. Data contributors can be mobile phones, tablets, or other edge devices with growing databases. In this system, $Q$ edge devices join the FLaaS platform training and each participant has its own database partitioned in many data blocks by time. We denote the number of data blocks from all edge devices in total as $K$. We assume there is no edge device in and out and data on devices are continuously generated.

Our framework includes $M$ data analysts, and each data analyst $i$ consists of $N_i$ pipelines. FL pipelines are associated with their training demands. A pipeline can only be successfully executed if its training demand is met. Pipelines are willing to get data blocks with more privacy loss as possible because it could improve the performance of trained pipelines. In addition, we assume data analysts are greedy and selfish. Greedy means the data analyst wants to satisfy more pipelines and selfish means the data analyst wouldn't return the resource back if its pipelines could benefit from it.
The main objective of the FLaaS platform is to allocate a certain amount of data blocks efficiently and fairly under limited privacy resources. An unfair allocation scheme would discourage data analysts from participating and an inefficient allocation scheme would cause a waste of privacy resources. 

Next, we briefly introduce the workflow of our system in Fig. \ref{framework}. In the training demand formulation part, the scheduler collects data analysts' general demand, including the number of demanding devices $N$, privacy budget $\overrightarrow{\epsilon_j}$ and time interval $\overrightarrow{T_j}$ of demanding blocks, and randomly selects a set of $N$ edge devices $\overrightarrow{N_j}$. After getting the privacy demands, the scheduler allocates privacy resources and sends results to the platform. Then, the platform distributes successfully allocated pipelines to the target devices for training and aggregates.

\subsection{Privacy resource model}
We partition each device's local data into many data blocks by time and partition granularity can be hour, day, or month. Each FL pipeline can specify the data blocks from a certain range of time. Every time the FL pipeline is trained on a local device, data blocks would lose part of their privacy. 
As time goes by, private data used for training FL pipelines would grow and get retired after exposing a certain amount of privacy. Data blocks can serve other FL pipelines before retirement. 

We define the privacy budget of data contributor $i$'s whole dataset as $\epsilon_i^g$ and any privacy budget of data block $j$ of it as $\epsilon_{ij}^g$. We define device $i$'s privacy loss as $\epsilon_i^c$ and any privacy budget of data block $j$ of it as $\epsilon_{ij}^c$. We assume privacy loss should not exceed privacy budget ($\epsilon_i^c \le \epsilon_i^g, \epsilon_{ij}^c \le \epsilon_{ij}^g$). Furthermore, we assume different edge device $i$ has different privacy budget $\epsilon_i^g$. In addition, to keep $(\alpha, \epsilon_i)$-RDP for edge device $i$, data block $j$ on edge device $i$ holds the same privacy budget as the global budget held by each edge device ($\epsilon_{ij}^g=\epsilon_i^g$). Furthermore, we assume the privacy allocation follows one-or-more property. 

\begin{definition}[One-or-more]
A pipeline from a data analyst would be successfully allocated if and only if its allocated resource is equal to or more than its minimal demand. Formally speaking, it can be represented as follows:

\begin{equation}
    \mathcal{D'} = \varkappa \mathcal{D}, \varkappa \ge 1,
\end{equation}
where $\mathcal{D}$ is the privacy demand vector of this pipeline, and $\mathcal{D'}$ is the vector representing the amount of privacy resource allocated to the pipeline.

\end{definition}

This property is unique to privacy resources, while traditional resources (e.g. CPU, memory) do not have. 

\subsection{Data analyst's efficiency model}
Before defining the platform's utility, we define the data analyst's efficiency first. We adopt dominant share, weighted by the data-pipeline matching degree and pipeline delay time as the efficiency of a data analyst. Pipeline delay time means how long a pipeline has been waiting since it proposes training demand. The data-pipeline matching degree means the importance of a certain data block to a certain pipeline measured by training loss \cite{li2022dplanner} \cite{lai2021oort}. We assume longer delay time and lower training loss bring lower efficiency. We use the weighted average of the training losses $l_{ij}$ of all data blocks on device $i$ to represent the training loss $l_i$ of this device.

\begin{equation}
    l_i = \sum_{j=1}^n \frac{\mu_{ij}}{\sum_{j=1}^n \mu_{ij}} l_{ij}.
\end{equation}

The data analyst's efficiency is defined as follows:

\begin{definition} [Data analyst's efficiency] 
Data Analyst's Efficiency is defined as data analyst $i$'s dominant share weighted by waiting time coefficient $T(t_i)$ and training loss $l_i$.
\begin{equation}
    U_i (x_i) = \mu_i x_i T(t_i) l_i,
\end{equation}
where $T(t_i)$ can be any monotonic decreasing function of data analyst's waiting time $t_i$.
\end{definition}

\subsection{Platform's utility model}
The platform's utility considers two perspectives. First, it pursues a higher overall efficiency for all data analysts. Second, it tries to balance the efficiency of the data analysts for fairness. Therefore, we present to model the platform's utility incorporating both efficiency and fairness.

For the efficiency, we define the following dominant efficiency to represent the platform's efficiency.

\begin{definition}[Dominant efficiency] 
Dominant efficiency is defined as the summation of all data analysts' efficiency
\begin{equation}
    \sum_{i=1}^m U_i(x_i).
    \label{Dominant efficiency}
\end{equation}
\end{definition}

For the fairness metric, we consider data analyst-level fairness, which means we consider fairness among data analysts. We give the formal definition of dominant fairness.

\begin{definition}[Dominant fairness]
We define dominant fairness $f_{\beta}(x_i)$ from the perspective of dominant share $\mu_i x_i$, average delay time $t_i$, and weighted average training loss $l_i$.
\begin{equation}
    f_{\beta}(x_i) =sgn(1-\beta) \bigg(\sum_{i=1}^m\Big(\frac{U_i(x_i)}{\sum_{i=1}^m U_i(x_i) }\Big)^{1-\beta}\bigg)^{\frac{1}{\beta}},
    \label{Dominant fairness}
\end{equation}

where $sgn(\cdot)$ is the signum function and $\beta$ can be considered as the fairness preference of platform manager.
\end{definition}

After that, we combine the efficiency metric and fairness metric and formulate the platform's utility function as

\begin{equation}
\label{opt 1}
    \begin{aligned}
        & \Psi_{\lambda}(x_i)=sgn(1-\beta) \bigg(\sum_{i=1}^m \Big(\frac{\mu_i x_i T(t_i)l_i }{\sum_{i=1}^m \mu_i x_i T(t_i) l_i} \Big)^{1-\beta} \bigg)^{\frac{1}{\beta}} \\     
        & (\sum_{i=1}^m \mu_i x_i T(t_i) l_i)^ \lambda,
    \end{aligned}
\end{equation}
where $\lambda$ is the efficiency preference of the platform manager.

When $\lambda = |\frac{1-\beta}{\beta}|$, \eqref{opt 1} can be
transformed into 
\begin{equation}
    \max sgn(1-\beta) \bigg(\sum_{i=1}^m \Big(\mu_i x_i T(t_i) l_i \Big)^{1-\beta} \bigg)^{\frac{1}{\beta}}
    \label{simplified problem},
\end{equation}
which is equivalent to the famous $\alpha$-fairness function \cite{Generalized2008Eitan}

\begin{equation}
\label{alpha-fairness}
    \max \sum_{i=1}^m \frac{ \Big(\mu_i x_i T(t_i)l_i \Big)^{1-\beta}}{1-\beta}.
\end{equation}

\subsection{Problem formulation}
From the perspective of the FLaaS platform, we aim at maximizing the platform's utility. Formally, we can formulate our problem as

\begin{align}
 \label{global opt} \max\quad &\Psi_{\lambda}(x_i), \\ 
 \label{resource availability}
    s.t. \quad &\sum_{i=1}^m \gamma_{i}^{<k>}x_i T(t_i) l_i \le 1, \forall k \in [1, K],\\
 \label{demand composition}
   &\gamma_{i}^{<k>} = \sum_{j=0}^{n_i} \gamma_{ij}^{<k>} x_{ij}, \forall i \in [1, M], \forall k \in [1, K], \\ 
  \label{One-or-more property}
  &x_{ij} = 0 \text{ or } x_{ij} \ge 1. 
\end{align}

This optimization problem is constrained by resource availability in \eqref{resource availability}, where $\gamma_{i}^{<k>}$ specifies the normalized demand that data analyst $i$ wants for resource $k$. Constraint \eqref{demand composition} represents the demand of data analyst $i$ is composed of demands of all its pipelines. Constraint \eqref{One-or-more property} is designed because of one-or-more property.

\section{Algorithm design and theoretical analysis}
\label{Algorithm design and theoretical analysis}

\subsection{Algorithm design}

The original optimization problem is non-convex and discontinuous because constraint in (\ref{One-or-more property}) contains a discrete point $x_{ij}=0$. Classical solutions for mixed integer optimization problems, including the branch and bound method, dynamic programming, and cutting plane method, are iterative and cannot provide a closed-form solution, which makes it hard to analyze the economic properties. 
Therefore, we decompose the original problem into two sub-problems.

The first sub-problem is to allocate privacy resources among data analysts and is formulated as
\begin{align}
 \label{sub-problem1} \max\quad &\Psi_{\lambda}(x_i), \\ 
 \label{resource availability1}
    s.t. \quad &\sum_{i=1}^m \gamma_{i}^{<k>}x_i T(t_i) l_i \le 1, \forall k\in[1,K],\\
    \label{new constraint}
    & x_i \ge 0.
\end{align}

The second sub-problem is to reallocate privacy resources among pipelines in one data analyst to cover more pipelines. It is formulated as
\begin{align}
 \label{sub-problem2} \max \quad & \sum_{j=1}^n \mu_{ij} x_{ij} T(t_{ij}) l_{ij},\\ 
 \label{resource availability2}
    s.t. \quad &\gamma_{i}^{<k>} x_i = \sum_{j=0}^{n_i} \gamma_{ij}^{<k>} x_{ij}, \forall i \in [1,M], \\
    \label{new constraint1}
    & x_{ij} = 0 \text{ or } x_{ij} \ge 1.
\end{align}

We then propose a novel sequential allocation algorithm to solve them. The detailed algorithm is presented in Algorithm. \ref{sequential allocation algorithm}. After decomposition, the first optimization problem becomes convex and continuous. We first assemble pipelines' training demands to formulate data analyst $i$'s training demand and calculate its maximum share of all required resources $\mu_i$ using \eqref{relative dominant share} (line 1). We then solve the optimization problem in \eqref{sub-problem1} along using Lagrange multiplier method and output each data analyst's allocation result $X$ (line 2).

We solve the second sub-problem in a greedy heuristic way, and the general policy is to return unused resources to the resource pool (line 4). To satisfy the greedy and selfish characteristics of data analysts mentioned in Section \uppercase\expandafter{\romannumeral4} and cover more pipelines in one data analyst, we temporarily relax the one-or-more property in \eqref{new constraint1} into integer constraint shown in \eqref{all or nothing}. For each data analyst $i$, we get the successfully allocated pipelines set $\chi_i$ by solving problem in \eqref{second step} (line 5). 

\begin{equation}
    \max \sum_j \Gamma (x_{ij}), \forall i \in [0,M],
    \label{second step}
\end{equation}
where
\begin{equation}
    \Gamma (x) = 
    \quad
    \begin{cases}
        \begin{aligned}
            & 1, x_{ij} \ge 1,   \\
           
            & 0, x_{ij} = 0. \\ 
        \end{aligned}
    \end{cases} \\
    \label{all or nothing}
\end{equation}

This action is to allocate fewer resources to each analyst and select as many pipelines as possible. Secondly, to make full use of allocated resources for data analysts, we choose to set $x_{ij} \in \chi_i$ and maximize the second sub-problem in \eqref{sub-problem2} using the commercial Gurobi solver \cite{gurobi2022gurobi} (line 6).
Each data analyst $i$ returns unused privacy resource back if there is privacy resource left or its $\gamma_{ij}$ from pipeline $j$ with minimal $\mu_{ij}$ Pareto-dominates than data analyst $i$'s share $\gamma_i x_i$ because of one-or-more property (line 7), where the latter means left resource can not meet the minimal requirement of any pipeline in the data analyst.
This behavior is also a kind of economizing behavior, which can give resources to the data analyst who will come continuously in the future. Finally, the coordinator assembles all pipelines' allocation result and form the global allocation result $X'$ (line 8).

\begin{algorithm}[t]
\caption{Sequential allocation algorithm in DPBalance}
\label{sequential allocation algorithm}
\begin{algorithmic}[1] 
\REQUIRE ~~\\ 
    The training demand of all the pipelines, $d_{ij} = [d_{ij}^{<1>}, d_{ij}^{<2>},...,d_{ij}^{<K>}], i \in [1, M], j \in [1, N]$; \\
    The delay time of all the data analysts' pipelines, $t=[t_1, t_2,...,t_M]$;\\
    The training loss of all the data analysts' pipelines, $l=[l_1, l_2,...,l_M]$, where $l_i=[l_{i1}, l_{i2},..., l_{iM}]$;\\
    The hyper-parameter $\beta$ and $\lambda$;\\
\ENSURE ~~\\ 
    Allocation result, $X' = [x_1, x_2,...,x_m], x_i = [x_{i1}, x_{i2},...,x_{iN}], \forall i \in [1, M]$; \\
    
    \STATE Calculate data analyst $i$'s maximum share of all the required resource $\mu_i$ and $\gamma_{ij}$ using \eqref{relative dominant share};  
    \STATE Solve the first sub-problem in \eqref{sub-problem1} along with constraints in \eqref{resource availability1}) and \eqref{new constraint}, and output the data analyst $i$'s primary allocation result $X$;\\  
    \FOR{each data analyst $i$}
    \STATE Return allocated resource if $\gamma_{ij}$ from pipeline $j$ with lowest $\mu_{ij}$ Pareto-dominates $\gamma_i x_i$ and break;  \\
    
    \STATE Maximize the optimization problem in \eqref{second step} with constraints and get the successful allocated pipeline set $\chi_i$;\\
    
    \STATE Maximize optimization problem in \eqref{sub-problem2} and form pipeline's allocation result $x_{ij}$; \\
    
    \STATE Return unused privacy resources back. \\
    
    \STATE Assemble all the pipelines' allocation results from all data analysts and form the final allocation result $X'$; \\
    \ENDFOR

\RETURN $X'$; 
\end{algorithmic}
\end{algorithm}

We further use the allocation example in Fig. \ref{Allocation example} to illustrate how our algorithm works. Firstly, we set fairness preference $\beta=2.2$ and efficiency preference $\lambda = \frac{\beta-1}{\beta}$. Then, we assemble data analysts' training demands by adding up all its pipelines' demand. After that, we solve the optimization problem in \eqref{sub-problem1} along with its constraints and get two data analysts' output. Alice gets $(0.5, 0.5)$ for ($B_1, B_2$), and Bob gets $(0.5, 0.42)$ for ($B_1, B_2$). As both data analysts are allocated with partial resources, they do not need to return resources to the resource pool. Then, by maximizing the optimization in \eqref{sub-problem2} along with constraints, we can get the resource allocation of two data analysts to their pipelines. Alice allocates $(0.5, 0.3)$ of ($B_1, B_2$) to pipeline $P_1$. Bob allocates $(0.5, 0.375)$ of ($B_1, B_2$) to pipeline $P_3$. Finally, Alice returns $0.2 B_2$ and Bob returns $0.045 B_2$ back. In this way, our algorithm achieves $1.0$ dominant efficiency and allocates $2.25$ pipelines, while DPF and DPK only achieve $0.7$ dominant efficiency and allocates $2$ pipelines successfully.

\subsection{Four key properties of fairness}
In this subsection, we introduce four definitions of key properties for fairness, and prove under what conditions these properties can be satisfied.

\begin{definition}[Pareto Efficiency (PE)] 
Suppose that we have 2 vectors \textbf{x} and \textbf{y}, if $x_i \ge y_i$ for any index $i$ and $x_j > y_j$ for some index $j$, we denote \textbf{x} Pareto-dominates \textbf{y}. 
\end{definition}

In other words, data analyst cannot increase its utility without decreasing others' utility. Mathematically speaking, if demand $x_i'$ Pareto-dominates $x_i$, then $\Psi(x_i') > \Psi(x_i)$.

\begin{theorem}
\label{Theorem PE}
The solution for the optimization problem in \eqref{opt 1} is PE if and only if when $\beta>0$ and $\vert{\lambda} \vert \ge \vert{\frac{1-\beta}{\beta}} \vert$.
\end{theorem}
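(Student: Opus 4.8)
By the notion of PE recalled just above the statement (namely $x'\succ x\Rightarrow\Psi_\lambda(x')>\Psi_\lambda(x)$, where $\succ$ denotes Pareto domination), the plan is to first pass from allocations to the analyst-efficiency vector $U=(U_1,\dots,U_m)$ with $U_i=\mu_i x_i T(t_i)l_i$. Since $\mu_i,T(t_i),l_i>0$, the map $x\mapsto U$ is strictly increasing in every coordinate, so $x'\succ x$ iff $U(x')\succ U(x)$; hence $\Psi_\lambda$ is PE iff, regarded as a function of $U$ on the open orthant $\mathbb{R}^m_{>0}$, it is strictly increasing in each coordinate $U_j$. The theorem thus reduces to deciding when $\partial\Psi_\lambda/\partial U_j>0$ holds throughout $\mathbb{R}^m_{>0}$ for every $j$.

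To get a handle on the partials I would rewrite $\Psi_\lambda$ through the aggregates $S=\sum_i U_i$ (the dominant efficiency of \eqref{Dominant efficiency}) and $A=\sum_i U_i^{1-\beta}$, obtaining $\Psi_\lambda=\operatorname{sgn}(1-\beta)\,A^{1/\beta}S^{\lambda-\frac{1-\beta}{\beta}}$. With $r:=\tfrac{1-\beta}{\beta}$, a direct differentiation gives
\begin{equation*}
\frac{\partial\Psi_\lambda}{\partial U_j}=\operatorname{sgn}(1-\beta)\,A^{\frac1\beta-1}S^{\lambda-r-1}\big[\,r\,U_j^{-\beta}S+(\lambda-r)A\,\big],
\end{equation*}
so, since $A,S>0$, the sign of the partial equals $\operatorname{sgn}(1-\beta)\cdot\operatorname{sgn}\!\big(r\,U_j^{-\beta}S+(\lambda-r)A\big)$. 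Everything now hinges on the sign of the bracket $B_j:=r\,U_j^{-\beta}S+(\lambda-r)A$ over $\mathbb{R}^m_{>0}$.

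The ``if'' direction is then a short sign check. For $0<\beta\le1$ we have $r\ge0$ and $\operatorname{sgn}(1-\beta)=1$, so once $\lambda\ge r=|r|$ both summands of $B_j$ are nonnegative with the first strictly positive, whence $B_j>0$ and $\partial_{U_j}\Psi_\lambda>0$. For $\beta>1$ we have $r<0$ and $\operatorname{sgn}(1-\beta)=-1$, and for $\lambda\le r$ both summands of $B_j$ are nonpositive with the first strictly negative, whence $B_j<0$ and again $\partial_{U_j}\Psi_\lambda>0$. In both cases $\Psi_\lambda$ is strictly coordinate-increasing, hence PE; consistently, the borderline $\lambda=r$ collapses $\Psi_\lambda$ to $\operatorname{sgn}(1-\beta)A^{1/\beta}$, i.e.\ the $\alpha$-fairness objective \eqref{alpha-fairness}, whose monotonicity for $\beta>0$ is the classical fact $\partial_{U_j}\sum_i U_i^{1-\beta}/(1-\beta)=U_j^{-\beta}>0$, and any larger $|\lambda|$ merely appends an extra strictly increasing power of the dominant efficiency $S$.

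The ``only if'' direction is the hard part. When a hypothesis fails I would exhibit a $U\in\mathbb{R}^m_{>0}$ at which $B_j$ carries the wrong sign. Taking $m-1$ of the $U_i$ equal to $1$ and letting the remaining one $U_j=u$ vary, $B_j$ becomes $(m-1)(r\,u^{-\beta}+\lambda-r)+\lambda\,u^{1-\beta}$; by elementary calculus this one-parameter function, for $\beta>0$, is minimized at $u^\star=(m-1)/\lambda$ with value $(m-1)\big[\tfrac1\beta(u^\star)^{-\beta}+\lambda-r\big]$. For $0<\beta<1$ and $\lambda<r$ this is negative once $m$ is large (the $\beta>1$ case being mirror-symmetric), showing $|\lambda|\ge|\tfrac{1-\beta}{\beta}|$ is necessary; and for $\beta\le0$ the same minimal value contains $\tfrac1\beta(u^\star)^{-\beta}$, which diverges to $-\infty$ as $m$ grows, so no $\lambda$ keeps $B_j$ of a single sign --- this is precisely where $\beta>0$ is forced. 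Finally one must realize such a bad point $U^0$ (with $\partial_{U_j}\Psi_\lambda(U^0)<0$) as a genuine feasible instance of \eqref{opt 1}, choosing the $\gamma_i^{<k>}$, $t_i$, $l_i$ and the privacy budget so that $U^0$ and a Pareto-better neighbour are both feasible, producing an explicit violation of PE. I expect the bulk of the effort --- and the main obstacle --- to be this converse: pinning the worst-case near-symmetric utility profile, checking that the threshold $|\lambda|=|\tfrac{1-\beta}{\beta}|$ is sharp uniformly over the number of analysts $m$, and packaging the counterexample as a legitimate scheduling instance; sufficiency, by contrast, is just the single sign computation above.
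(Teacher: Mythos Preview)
Your approach differs from the paper's: you differentiate $\Psi_\lambda$ directly in the efficiency coordinates $U_j$ and reduce PE to a sign condition on the bracket $B_j=rU_j^{-\beta}S+(\lambda-r)A$, whereas the paper first passes to the log-transform $\Phi_\lambda(x)=l(f_\beta(x))+\lambda\,l(S)$ with $l(y)=\operatorname{sgn}(y)\log|y|$ and then bounds the finite difference $\Phi_\lambda(x')-\Phi_\lambda(x)$ via a case split on whether the Pareto-larger $x'$ is ``more fair'' or ``less fair'' than $x$. Both routes land on the same threshold $|\tfrac{1-\beta}{\beta}|$, and both construct an explicit near-symmetric profile for necessity (you parametrize by one free coordinate $u$; the paper takes $x_i=1$ for $i\le m$ and $x_{m+1}=m$, then perturbs $x_{m+1}$). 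Your derivative computation is cleaner and dispenses with the fair/less-fair dichotomy.

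There is, however, a genuine gap in your ``if'' direction relative to the theorem \emph{as stated}. You prove $\partial_{U_j}\Psi_\lambda>0$ only for $\lambda\ge r$ when $0<\beta\le1$ and for $\lambda\le r$ when $\beta>1$, i.e.\ only for $\operatorname{sgn}(1-\beta)\,\lambda\ge|r|$, not for the full range $|\lambda|\ge|r|$; the case $\beta>1$, $\lambda>0$ (and its mirror) is never addressed. In fact that half is false: at $\beta=2$, $\lambda=1$, $m=2$, $U_1=U_2=1$ one has $B_1=2r+2(\lambda-r)=2\lambda>0$ while $\operatorname{sgn}(1-\beta)=-1$, so $\partial_{U_1}\Psi_\lambda<0$ and PE fails despite $|\lambda|=1\ge\tfrac12=|r|$. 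The paper's argument shares this defect---its transform $\Phi_\lambda$ is not order-equivalent to $\Psi_\lambda$ when $\beta>1$ (the $\lambda\log S$ term carries the wrong sign), and the closing sentence about ``adding the absolute symbol'' does not repair it---so what your sign analysis actually pins down is the \emph{correct} sharp condition. You should flag the discrepancy with the stated theorem rather than leave the missing half silent. A minor secondary point: your minimizer $u^\star=(m-1)/\lambda$ lies in $(0,\infty)$ only when $\lambda>0$, so the necessity construction for $0<\beta<1$ needs a one-line amendment for $\lambda\le0$ (e.g.\ send $u\to\infty$ and read off $B_j\sim\lambda u^{1-\beta}\le0$).
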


\begin{proof}
The detailed proof can be found in Appendix \ref{proof of theorem 1}.
\end{proof}

\begin{definition} [Sharing Incentive (SI)] 
A sharing incentive allocation $x_i$ to data analyst $i$ satisfies the sharing incentive property if each data analyst $i$ gains more utility than the evenly fair share result $x'_i$, i.e.,

$$U_i(x_i) \ge U_i(x'_i).$$
\end{definition}

In other words, there is no incentive for data analysts to ask servers to partition privacy resources equally.

\begin{theorem}
\label{theorem SI}
The solution to the original optimization problem in \eqref{global opt} satisfies SI (or not) as follows.

(a) SI is satisfied when $\beta>1$ and $\lambda = \frac{\beta-1}{\beta}$.

(b) SI is not satisfied when $0<\beta<1$ and $\lambda = \frac{\beta-1}{\beta}$.

(c) SI is satisfied when $\lambda = 0$ for any $\beta$.

(d) SI is not satisfied when $\lambda = \infty$ for any $\beta$.
\end{theorem}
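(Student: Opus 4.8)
\emph{Proof proposal.} My plan is to reduce all four statements to a single comparison at the analyst level. Writing $a_{ik}:=\gamma_i^{<k>}/\mu_i\in[0,1]$, so that $\max_k a_{ik}=1$ by \eqref{data analyst's relative dominant share}, and $U_i:=\mu_i x_i T(t_i)l_i$, the resource constraint \eqref{resource availability} becomes $\sum_i a_{ik}U_i\le 1$ for every block $k$. The ``evenly fair share'' allocation $x_i'$ partitions every block equally among the $m$ analysts, so $a_{ik}U_i'\le 1/m$ for all $k$, tight at the dominant block; hence $U_i(x_i')=1/m$ for every analyst, and SI holds precisely when the DPBalance solution $x^\star$ of \eqref{global opt} produced by the analyst-level step \eqref{sub-problem1} obeys $U_i(x_i^\star)\ge 1/m$ for all $i$, and fails as soon as some analyst gets strictly less. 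I will then use three facts already established in the excerpt: in the regime of \eqref{simplified problem}, maximizing $\Psi_\lambda$ is equivalent to the $\alpha$-fairness problem \eqref{alpha-fairness} with $\alpha=\beta$ (this governs (a) and (b)); for $\lambda=0$, $\Psi_0$ depends only on the normalized profile $p_i=U_i/\sum_j U_j$; and for $\lambda=\infty$ the objective is dominant efficiency \eqref{Dominant efficiency}, i.e.\ $\max\sum_i U_i$ with the fairness factor acting only as a tie-breaker.

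Case (a), $\beta>1$. The reduced objective $\sum_i (U_i)^{1-\beta}/(1-\beta)$ is strictly concave and tends to $-\infty$ as any $U_i\to 0$, so the optimum $x^\star$ is unique with $U_i^\star>0$, and KKT applies: there are multipliers $q_k\ge 0$ with $(U_i^\star)^{-\beta}=\sum_k q_k a_{ik}$. Since $a_{ik}\le 1$ this gives the uniform lower bound $U_i^\star\ge(\sum_k q_k)^{-1/\beta}$ for every $i$. By complementary slackness $\sum_k q_k=\sum_k q_k\bigl(\sum_i a_{ik}U_i^\star\bigr)=\sum_i U_i^\star(U_i^\star)^{-\beta}=\sum_i (U_i^\star)^{1-\beta}$. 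Letting $u:=\min_i U_i^\star$ and using $1-\beta<0$ so $(U_i^\star)^{1-\beta}\le u^{1-\beta}$, we obtain $u\ge(m\,u^{1-\beta})^{-1/\beta}=m^{-1/\beta}u^{(\beta-1)/\beta}$, i.e.\ $u^{1/\beta}\ge m^{-1/\beta}$, hence $u\ge 1/m$. So every analyst gets at least its fair share $1/m$ and SI holds. I expect this self-referential closing step to be the main obstacle, because it is exactly here that $\beta>1$ is used: for $0<\beta<1$ the inequality $(U_i^\star)^{1-\beta}\le u^{1-\beta}$ reverses and the chain breaks --- which foreshadows (b).

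Case (c), $\lambda=0$. Here maximizing $\Psi_0$ over the feasible polytope amounts to choosing the profile $p$ on the simplex maximizing $sgn(1-\beta)\bigl(\sum_i p_i^{1-\beta}\bigr)^{1/\beta}$; for $\beta\ne 1$ the map $p\mapsto\sum_i p_i^{1-\beta}$ is strictly concave (if $\beta<1$) or strictly convex (if $\beta>1$) and symmetric, so every optimum has $p_i=1/m$, i.e.\ all $U_i^\star$ equal. The Lagrange step of Algorithm~\ref{sequential allocation algorithm} pushes this ray to the boundary, so $U_i^\star=1/\max_k\sum_i a_{ik}\ge 1/m$ because each $a_{ik}\le 1$ forces $\max_k\sum_i a_{ik}\le m$; hence SI holds for every $\beta$ (the boundary $\beta=1$ handled through the limiting proportional-fairness objective).

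Cases (b) and (d) are refuted by one instance: three analysts, two blocks, where analyst~$2$ demands both blocks at its dominant level ($a_{2,1}=a_{2,2}=1$) while analyst~$1$ demands only block~$1$ and analyst~$3$ only block~$2$ ($a_{1,1}=a_{3,2}=1$, $a_{1,2}=a_{3,1}=0$). Then $U_i(x_i')=1/3$ for each $i$ and the feasible region is $\{U_1+U_2\le 1,\ U_2+U_3\le 1\}$. For (d), $\max(U_1+U_2+U_3)=2$ is attained only at $(1,0,1)$, so analyst~$2$ receives $0<1/3$, for any $\beta$; SI fails. For (b), solving the $\alpha=\beta\in(0,1)$ fairness problem --- using the symmetry $U_1^\star=U_3^\star=1-U_2^\star$ and a one-variable first-order condition --- gives $U_2^\star=1/(1+2^{1/\beta})<1/3$, so analyst~$2$ again drops below its fair share; the same computation returns exactly $1/3$ at $\beta=1$ and a value above $1/3$ when $\beta>1$, consistent with (a). The remaining work is routine: verifying feasibility and optimality of these points, checking that the $\alpha$-fairness reduction is invoked with the sign of $\lambda$ matching each subcase, and confirming all $m$ analysts are active so that the fair share is indeed $1/m$.
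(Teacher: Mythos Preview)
Your argument is correct, and for parts (a) and (c) it is essentially the same approach as the paper's: both write the KKT/Lagrange stationarity condition for the $\alpha$-fairness reduction \eqref{alpha-fairness}, use $\gamma_i^{<k>}/\mu_i\le 1$ to bound the multiplier sum, collapse that sum via complementary slackness to $\sum_i (U_i^\star)^{1-\beta}$, and close with the self-referential inequality $u^{-\beta}\le m\,u^{1-\beta}$ that yields $u\ge 1/m$. Your notation $a_{ik}=\gamma_i^{<k>}/\mu_i$, $U_i=\mu_i x_i T(t_i)l_i$ makes the computation cleaner than the paper's, but the skeleton is identical.

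Where you genuinely diverge is in the negative cases. The paper treats (b) and (d) separately: for (b) it reduces to a single binding resource, solves the stationarity condition in closed form to get $\mu_i a_i x_i=(\gamma_i/\mu_i)^{-1/\beta}\big/\sum_j(\gamma_j/\mu_j)^{(\beta-1)/\beta}$, and observes this can fall below $1/m$; for (d) it uses a two-analyst, two-block instance with demands $(1,\gamma)$ and $(0,1)$ and shows the efficiency-maximizing point gives analyst~2 share $1-\gamma<1/2$ once $\gamma>1/2$. Your unified three-analyst instance ($a_{1,\cdot}=(1,0)$, $a_{2,\cdot}=(1,1)$, $a_{3,\cdot}=(0,1)$) handles both at once: for $\lambda=\infty$ the unique efficiency maximizer is $(1,0,1)$, and for $0<\beta<1$ the symmetric $\alpha$-fair optimum gives $U_2^\star=1/(1+2^{1/\beta})<1/3$. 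This is more economical and has the nice feature of recovering exactly $1/3$ at $\beta=1$ and $>1/3$ for $\beta>1$, so the same example illustrates the phase transition. The paper's closed-form route in (b), on the other hand, gives a formula valid for arbitrary demand ratios rather than a single instance, which is marginally more informative but not needed for the theorem as stated.
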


\begin{proof}
The detailed proof can be found in Appendix \ref{proof of theorem 2}.
\end{proof}

Next, we would like to define envy-freeness:

\begin{definition} [Envy-Freeness (EF)] 
This property holds if and only if any data analyst $i$ would not envy other data analyst's allocation. Mathematically speaking, data analyst $i$ and data analyst $j$ get the allocation $x_i$ and allocation $x_j$ respectively in our allocation scheme.  Suppose that data analyst $i$ envies data analyst $j$'s allocation $x_j$, it's impossible that data analyst $i$ gets more utility with data analyst $j$'s allocation result $x_j$.
\end{definition}

\begin{theorem}
\label{Theorem EF}
The solution to the original optimization problem in \eqref{global opt} satisfies EF (or not) as follows.

(a) EF is satisfied when $\beta>1$ and $\lambda = \frac{\beta-1}{\beta}$.

(b) EF is not satisfied when $0<\beta<1$ and $\lambda = \frac{\beta-1}{\beta}$.

(c) EF is satisfied when $\lambda = 0$ for any $\beta$.

(d) EF is not satisfied when $\lambda = \infty$ for any $\beta$.
\end{theorem}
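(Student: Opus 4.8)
\noindent
The plan is to handle the four sub-claims one at a time, in each case first reducing the platform objective \eqref{global opt} to a canonical form and then arguing positively (no analyst can do better with another analyst's bundle) or exhibiting a counterexample. Throughout write $c_i\triangleq\mu_iT(t_i)l_i>0$ so that $U_i(x_i)=c_ix_i$, let $x^\star=(x_1^\star,\dots,x_m^\star)$ be the data-analyst-level allocation produced by \eqref{sub-problem1}, and read ``analyst $i$ envies analyst $j$'' as: given $j$'s resource bundle $\langle\gamma_j^{<k>}x_j^\star\rangle_k$, analyst $i$ can fill its pipelines at ratio $\rho_{ij}\triangleq\min_k\gamma_j^{<k>}x_j^\star/\gamma_i^{<k>}$ and so attains $c_i\rho_{ij}>c_ix_i^\star$, i.e.\ $\rho_{ij}>x_i^\star$. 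Since Theorem~\ref{Theorem EF} has exactly the case structure of Theorem~\ref{theorem SI}, I expect its proof to reuse the same machinery, the only change being that the benchmark allocation is another analyst's bundle rather than the equal split.

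\noindent
For the positive cases (a) and (c) I would argue by contradiction with optimality of $x^\star$. In (a), $\lambda=\frac{\beta-1}{\beta}$ reduces \eqref{global opt} to the $\alpha$-fairness program \eqref{alpha-fairness} with $\alpha=\beta>1$, whose objective $\sum_i\frac{(c_ix_i)^{1-\beta}}{1-\beta}$ is separable, strictly concave and increasing in each $x_i$; hence $x^\star$ is its unique maximizer over the polytope \eqref{resource availability1}--\eqref{new constraint} and obeys KKT stationarity there. If analyst $i$ envied analyst $j$, then $\gamma_j^{<k>}x_j^\star\ge\rho_{ij}\gamma_i^{<k>}>x_i^\star\gamma_i^{<k>}$ on every block $k$ that $i$ demands, so shifting an infinitesimal fill ratio $\delta$ from $j$ to $i$ keeps feasibility; the first-order change of the objective, with $\rho_{ij}>x_i^\star$ and stationarity substituted in, is strictly positive, contradicting optimality. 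For (c), $\lambda=0$ makes $\Psi_0=sgn(1-\beta)f_\beta$ scale-free and dependent only on the utility ratios, hence maximized exactly when all $U_i$ are equal; taking the Pareto-maximal such allocation (which saturates some block), the bottleneck bound $\rho_{ij}\le\gamma_j^{<k^\star_i>}x_j^\star/\gamma_i^{<k^\star_i>}\le\mu_jx_j^\star/\mu_i$ against $i$'s dominant block $k^\star_i$ (using $\gamma_i^{<k^\star_i>}=\mu_i$ and $\gamma_j^{<k^\star_i>}\le\mu_j$), together with the equalization of $U$, yields $\rho_{ij}\le x_i^\star$ (up to a monotonicity check on the $T(t_i)l_i$ factors, addressed below) and hence EF.

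\noindent
For the negative cases (b) and (d) it suffices to exhibit a small instance with envy, and the instance of Fig.~\ref{Allocation example}, lightly modified so the analysts have unequal weights $c_1\neq c_2$, works. For $\lambda=\infty$ the objective is the linear program $\max\sum_ic_ix_i$ over \eqref{resource availability1}, whose optimum is a vertex concentrating the binding block on the analyst with the larger density $c_i/\gamma_i^{<k>}$ and leaving the other with $x^\star=0$; since then $\rho>0=x^\star$, the starved analyst envies the winner. For $0<\beta<1$ the canonical form is $\alpha$-fairness with $\alpha=\beta<1$, close enough to utilitarian that the optimizer stays strongly skewed; I would choose numeric weights in the two-block instance so that the disadvantaged analyst's achievable ratio on the other's bundle strictly exceeds its own, and verify $\rho_{ij}>x_i^\star$ by direct computation.

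\noindent
The main obstacle is the positive cases once the analyst-specific factors $T(t_i)l_i$ enter. In (a) the perturbation direction must be at once feasible --- the donor $j$ must over-use \emph{every} block the recipient $i$ needs, which is precisely what $\rho_{ij}>x_i^\star$ delivers --- and first-order improving, and the resulting inequality has to be closed using only KKT stationarity, whose per-block multipliers on \eqref{resource availability1} need not be unique, so the argument is cleanest in terms of total displaced resource rather than block-by-block. In (c) the bottleneck bound must be reconciled with the $T(t_i)l_i$ factors that prevent the dominant shares from being exactly equal at the $U$-equalizing allocation, which reduces to a monotonicity comparison handled as in the proof of Theorem~\ref{theorem SI}. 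I would therefore isolate the crux as a lemma --- the maximizer of any separable, strictly concave, increasing utility aggregation over the feasible polytope \eqref{resource availability1} is envy-free --- after which (a) and the positive half of (c) follow at once, and (b), (d) simply record its breakdown as the aggregation degenerates toward the utilitarian sum.
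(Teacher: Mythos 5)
Your case structure and overall strategy (optimality-based contradiction for the positive cases, explicit counterexamples for the negative ones, reuse of the $\lambda=\infty$ instance from the SI theorem) match the paper's skeleton, but the proposal has two concrete gaps. First, part (b) is never actually proven: you say you ``would choose numeric weights \dots and verify $\rho_{ij}>x_i^\star$ by direct computation,'' but no instance is exhibited and no computation is done. The paper closes (b) with no counterexample at all --- it observes that the same KKT chain used in (a) reverses sign when $0<\beta<1$, because $t\mapsto t^{1-\beta}$ switches from decreasing to increasing, so the envy inequality is no longer contradicted. Second, your key exchange lemma for (a) (``the maximizer of any separable, strictly concave, increasing aggregation over \eqref{resource availability1} is envy-free'') is asserted rather than proven, and you yourself flag the two obstructions --- how much $x_j$ must actually decrease to free $\delta\gamma_i^{<k>}$ on every block, and the non-uniqueness of the per-block multipliers --- without resolving them. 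The paper avoids the perturbation entirely: it plugs the envy hypothesis directly into the stationarity condition $(\mu_i T(t_i)l_i)^{1-\beta}x_i^{1-\beta}=\sum_k\lambda_k\gamma_{ik}x_iT(t_i)l_i$ already derived in the SI proof, compares the two sides for $i$ and $j$, and gets a sign contradiction from $\beta>1$ in two lines.

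A further discrepancy worth noting: you adopt the standard bundle-based notion of envy (analyst $i$ evaluates $j$'s bundle through its own demand profile via the fill ratio $\rho_{ij}$), whereas the paper operationalizes ``$i$ envies $j$'' simply as $\mu_jx_jT(t_j)l_j>\mu_ix_iT(t_i)l_i$, i.e.\ a comparison of weighted dominant shares. Under the paper's reading, case (c) is immediate (the $\lambda=0$ optimum equalizes all weighted dominant shares, so nobody has more than anybody else), while under your reading (c) genuinely requires the bottleneck bound $\rho_{ij}\le\mu_jx_j^\star/\mu_i$ plus the monotonicity comparison in the $T(t_i)l_i$ factors that you defer --- and as your own bound shows, when $T(t_i)l_i>T(t_j)l_j$ equal weighted dominant shares do \emph{not} by themselves rule out $\rho_{ij}>x_i^\star$, so that deferred step is not cosmetic. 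Either align your definition with the paper's (which trivializes (c) and shortens (a)) or complete the monotonicity argument; as written, (b), (c), and the lemma underlying (a) are all open.
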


\begin{proof}
The detailed proof can be found in Appendix \ref{proof of theorem 3}.
\end{proof}

\begin{definition} [Strong Strategy Proofness (SSP)] 
It depicts that a data analyst can not increase both its utility and non-dominant share by lying about its training demands.
\end{definition}

\begin{definition} [Weak Strategy Proofness (WSP)] 
It depicts that when a data analyst lies about its training demands, it may increase its utility but decrease its non-dominant share.
\end{definition}

Both definitions of Strategy Proofness (SP) mean data analyst has no incentive to lie about training demands. Under SSP, there would be no benefit from lying. Under WSP, lying about training demand may increase its utility, but it may also reduce non-dominant shares. Unlike the traditional resource allocation where training demands must be strictly proportional to each resource, getting extra non-dominant privacy resources may also help data analyst to finish more pipelines or increase some pipeline's accuracy as the data analyst's training demand is composed of pipelines' training demands.

\begin{theorem}
\label{Theorem SP}
The solution to the original optimization problem in \eqref{global opt} satisfies SP (or not) as follows.

(a) WSP is satisfied when $\beta>1$ and $\lambda = \frac{\beta-1}{\beta}$.

(b) Neither WSP nor SSP is satisfied when $0<\beta<1$ and $\lambda = \frac{\beta-1}{\beta}$.

(c) SSP is satisfied when $\lambda = 0$ for any $\beta$.

(d) SSP is satisfied when $\lambda = \infty$ for any $\beta$.
\end{theorem}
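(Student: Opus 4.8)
The plan is to treat the four parameter regimes with one common recipe: (i) reduce the platform objective to a canonical form, (ii) read off how analyst $i$'s allocated bundle $(\gamma_i^{<k>}x_i)_k$, its dominant share $\mu_ix_i$, its utility $U_i=\mu_ix_iT(t_i)l_i$, and its non-dominant coordinates depend on the \emph{reported} demand profile and on the resource prices, and then compare the truthful and misreported allocations. For parts (a),(b) the reduction is the one already recorded in \eqref{simplified problem}--\eqref{alpha-fairness}: the objective of \eqref{global opt} becomes the $\alpha$-fairness functional $\sum_i(\mu_ix_iT(t_i)l_i)^{1-\beta}/(1-\beta)$, a concave program over the resource polytope $\{x\succeq0:\ \sum_i\gamma_i^{<k>}x_iT(t_i)l_i\le1\ \forall k\}$. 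For part (c), $\lambda=0$ makes $\Psi_0=f_\beta$, a Schur-concave function of the share-ratio vector $(U_i/\sum_jU_j)_i$, whose maximizer equalizes the $U_i$ at the largest value the constraints permit. For part (d), $\lambda\to\infty$ makes the dominant-efficiency term $(\sum_iU_i)^\lambda$ dominate, so the limiting problem is the linear program $\max\sum_iU_i$ over the same polytope.

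For part (a) ($\beta>1$) I would write the KKT conditions of the $\alpha$-fair program: with multipliers $\nu_k\ge0$ on the block constraints, $x_i=\big(\tfrac{(\mu_iT(t_i)l_i)^{1-\beta}}{\sum_k\nu_k\gamma_i^{<k>}T(t_i)l_i}\big)^{1/\beta}$, so (treating $T(t_i),l_i$ as fixed coefficients) $U_i\propto\big(\tfrac{\mu_i}{\sum_k\nu_k\gamma_i^{<k>}}\big)^{1/\beta}$. From this I would show that any misreport that strictly raises $U_i$ forces a strict drop in analyst $i$'s received amount $\gamma_i^{<k>}x_i$ on at least one non-dominant block, by splitting on whether the misreport raises $\mu_i$ — inflating a block until it becomes the dominant one, in which case $x_i$ shrinks and the received amount on the previously dominant block strictly drops — or keeps $\mu_i$ fixed, in which case gaining $U_i$ requires lowering the ``price load'' $\sum_k\nu_k\gamma_i^{<k>}$, i.e.\ under-reporting on some block $k$ with $\nu_k>0$ other than its dominant one, whose received amount then drops. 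This is exactly WSP (and exhibiting one such beneficial-but-costly misreport shows SSP is genuinely stronger and fails here). The same KKT formula handles part (b): when $0<\beta<1$ the objective pulls toward raw efficiency, and I expect a small two-analyst/two-block instance (in the style of Fig.~\ref{Allocation example}) in which a suitable misreport makes analyst $i$ look ``cheaper to serve,'' so the optimum responds by handing it strictly more on \emph{every} block — raising both $U_i$ and its non-dominant coordinates — contradicting WSP, hence SSP.

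For parts (c) and (d) it is enough to rule out \emph{any} utility gain from misreporting. In (c), equalized utilities give $U_i\propto1/\max_k\sum_{i'}\gamma_{i'}^{<k>}/\mu_{i'}$; since each analyst's contribution $\gamma_{i'}^{<k>}/\mu_{i'}$ lies in $[0,1]$ and equals $1$ at that analyst's own dominant block, analyst $i$'s contribution is pinned to $1$ at a block it cannot avoid, so the relevant maximum can only weakly increase under $i$'s deviation, $U_i$ does not rise, and SSP holds. In (d), the linear program selects analysts by an efficiency ratio in which any misreport only makes analyst $i$ ``heavier'' per unit dominant share, so its $x_i$ — hence $U_i=\mu_ix_iT(t_i)l_i$ — can only decrease or stay, again giving SSP. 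The main obstacle is item (ii) in the generic cases: because $\mu_i=\max_k\gamma_i^{<k>}$, because $U_i\propto\mu_ix_i$, and because $l_i=\sum_j\frac{\mu_{ij}}{\sum_j\mu_{ij}}l_{ij}$ also moves with the reported pipeline demands, a single misreport simultaneously perturbs the objective's weights, the identity of analyst $i$'s dominant block, the rates at which it consumes each block, and the prices $\nu_k$ through the equilibrium reaction of every other analyst; isolating the \emph{sign} of the induced change in the non-dominant coordinates for part (a), and choosing instance parameters so that the efficiency pull is unambiguously beneficial for every $0<\beta<1$ in part (b), is where the real work lies — the limiting arguments for (c),(d) are comparatively routine once the equalization and vertex structure are in hand.
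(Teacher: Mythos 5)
Your strategy for parts (a) and (b) is essentially the paper's: reduce to the $\alpha$-fairness program, extract the closed-form/KKT allocation, and track how the weighted dominant share $\mu_j a_j x_j$ and the weighted non-dominant share $\gamma_j a_j x_j$ respond to a misreport. The paper executes this more concretely: it posits that every lie decomposes into three canonical \emph{over}-reports (inflate $\mu_j$ only, inflate $\gamma_j$ only, inflate both in proportion), substitutes into the explicit allocation formulas \eqref{e44}--\eqref{e45} obtained from the Lagrangian, and reads off signs directly -- for $\beta>1$ inflating $\mu_j$ raises the dominant share but lowers the non-dominant one (hence WSP), while for $0<\beta<1$ both rise (hence neither WSP nor SSP); no separate two-analyst counter-instance is needed for (b). Two cautions about your (a): the paper's lie model admits only over-reporting, so your branch in which the analyst under-reports on a priced non-dominant block lies outside the model being analyzed; and your general-price KKT analysis still has to pin down the sign of the change in the non-dominant received amounts while the multipliers $\nu_k$ re-equilibrate -- exactly the step the paper sidesteps by collapsing all $\gamma_{ik}$ to a single $\gamma_i$.

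For (c) and (d) you genuinely diverge from the paper, and here both arguments have gaps. The paper's own appendix concludes only WSP for (c) and (d) (inconsistently with the theorem statement, which asserts SSP), and its proof of (d) claims the limiting linear program equalizes weighted dominant shares, contradicting its own Appendix B(d) where the LP optimum is $x_1=1$, $x_2=1-\gamma$. Your attempt to establish SSP directly is the right target, but your key claim in (c) -- that the binding maximum $\max_k\sum_{i'}\gamma_{i'}^{<k>}/\mu_{i'}$ can only weakly increase under analyst $i$'s deviation -- is not obviously true: a misreport that relocates analyst $i$'s dominant block drops its contribution on the previously binding block from $1$ to something smaller and can therefore lower that maximum; closing this requires the standard DRF-style step of measuring the resulting bundle against the \emph{true} demand direction and showing the effective $x_i$ does not grow. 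Likewise your (d) argument ("any misreport only makes the analyst heavier per unit dominant share") overlooks that inflating $\mu_i$ also inflates analyst $i$'s coefficient in the LP objective, so the optimum can tilt toward the liar; this branch needs an explicit argument or a tie-breaking convention before SSP (or even WSP) can be concluded.
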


\begin{proof}
The detailed proof can be found in Appendix \ref{proof of theorem 4}.
\end{proof}

\subsection{Fairness-efficiency tradeoff} 
\label{subsec:tradeoff}
In this subsection, we theoretically analyze the relationship between efficiency and fairness by examining the monotonicity of these two metrics.

For simplicity, we take $\alpha$-fairness function to discuss the tradeoff. As mentioned, when $\lambda = |\frac{1-\beta}{\beta}|$, the original optimization problem can be degenerated into the famous $\alpha$-fairness function in \eqref{alpha-fairness}, where $\alpha$ can be considered as the fairness measurement \cite{tang2006counter} \cite{renyi1961measures} \cite{menezes1970theory}. The larger $\beta$ ($\beta=\alpha$) is, the fairer the utility is. When $\beta=0$, it means our utility function in \eqref{alpha-fairness} only focuses on efficiency. When $\beta$ approaches $\infty$, our utility function becomes the fairest.

\begin{theorem} [Efficiency non-increasing property]
\label{theorem Efficiency non-increasing property}
Efficiency is non-increasing as $\beta$ increases if and only if when
\begin{equation}
\label{Efficiency non-increasing property}
    \sum^{M-K}_{i=1} \tau_i \det \overline{A_i} \ge 0.
\end{equation}
\end{theorem}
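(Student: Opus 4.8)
The plan is to treat the $\alpha$-fairness optimization in \eqref{alpha-fairness} as a parametric convex program in $\beta$, and to study how the optimal efficiency value $E(\beta) \triangleq \sum_{i=1}^m \mu_i x_i^*(\beta) T(t_i) l_i$ varies with $\beta$, where $x^*(\beta)$ is the maximizer. First I would write the KKT system for the simplified problem \eqref{simplified problem}: at optimality each active resource constraint \eqref{resource availability1} holds with equality, and the stationarity conditions couple the allocations $x_i^*$ to the Lagrange multipliers $\tau_k$ of the $K$ binding resource constraints. With $M$ data analysts and $K$ resources, generically $M-K$ of the "variables" are pinned by the $M$ stationarity equations once the $K$ tight constraints are used to eliminate the rest; this is where the index set $i = 1,\dots,M-K$ and the submatrices $\overline{A_i}$ in \eqref{Efficiency non-increasing property} come from — they are the minors of the linearized KKT system obtained by Cramer's rule.

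The key steps, in order, would be: (1) differentiate the KKT equations implicitly with respect to $\beta$ to get a linear system $J \, \frac{d}{d\beta}(x^*,\tau) = b(\beta)$, where $J$ is the Jacobian of the KKT map and $b$ collects the explicit $\beta$-dependence coming from the exponent $1-\beta$ in the objective (this is the $\frac{\partial}{\partial \beta}$ of the stationarity residual, which involves $\log(\mu_i x_i T(t_i) l_i)$ terms); (2) invert via Cramer's rule so that each $\frac{dx_i^*}{d\beta}$ is a ratio $\det(\text{modified } J)/\det J$, with the modified determinants expanding into the $\det \overline{A_i}$ appearing in the statement and the $\tau_i$ entering as the multipliers that weight those minors; (3) form $\frac{dE}{d\beta} = \sum_i \mu_i T(t_i) l_i \frac{dx_i^*}{d\beta}$ and collect terms so that the whole derivative reduces, up to a positive common factor, to $-\sum_{i=1}^{M-K} \tau_i \det \overline{A_i}$; (4) conclude that $\frac{dE}{d\beta} \le 0$ — i.e. efficiency is non-increasing — exactly when $\sum_{i=1}^{M-K} \tau_i \det \overline{A_i} \ge 0$, which is the claimed iff condition. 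I would also check the boundary/degenerate behavior (constraints changing activity as $\beta$ varies) is either excluded by a genericity assumption or handled by a one-sided derivative argument.

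The main obstacle I expect is step (2)–(3): bookkeeping the structure of the KKT Jacobian so that its cofactor expansion genuinely collapses to the stated sum $\sum_{i=1}^{M-K}\tau_i\det\overline{A_i}$ rather than some messier combination. This requires a careful, basis-dependent choice of which $K$ variables are eliminated using the tight resource constraints, an argument that the corresponding $K\times K$ block of the constraint matrix is invertible (so that the Schur complement / reduced Hessian is well defined), and then identifying $\overline{A_i}$ precisely as the minor of the reduced system with the $i$-th column replaced by the $\beta$-derivative vector. A secondary subtlety is sign control: the reduced Hessian of an $\alpha$-fairness objective is negative definite (the objective is strictly concave for $\beta>0$), which fixes the sign of $\det J$ and lets the inequality in \eqref{Efficiency non-increasing property} be stated cleanly without an extra sign factor; I would make that concavity/definiteness fact explicit and cite the strict concavity of \eqref{alpha-fairness} for $\beta>0$ established implicitly in the discussion preceding Theorem \ref{Theorem PE}.
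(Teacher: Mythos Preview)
Your proposal is correct in outline and would reach the stated condition, but it takes a somewhat different mechanical route from the paper. The paper bypasses the full KKT Jacobian entirely: it fixes a basis $Z=[z_1,\dots,z_{M-K}]$ for the null space of the $M\times K$ active-constraint matrix, lets $D$ be the Hessian of the $\alpha$-fairness objective \eqref{alpha-fairness} and $b=\partial^2 U/\partial x\,\partial\beta$, and then invokes the sensitivity formula from \cite{tang2006counter},
\[
\frac{\partial E}{\partial \beta}\;=\;\mathbf{1}^T Z\,(Z^T D Z)^{-1}\,Z^T b,
\]
so that the reduced matrix $A=Z^T D Z$ is exactly the $(M-K)\times(M-K)$ object whose Cramer-rule minors give the $\overline{A_i}$ in \eqref{Efficiency non-increasing property}. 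In other words, the paper goes straight to the reduced-Hessian formulation you flag as a \emph{possible} alternative in your ``obstacles'' paragraph, rather than inverting the full KKT Jacobian and extracting the Schur complement afterwards. What your full-space approach buys is self-containment (no external citation for the sensitivity formula) and a clearer provenance for the $\tau_i$ as Lagrange-multiplier weights; what the paper's null-space shortcut buys is that the bookkeeping you rightly worry about in step~(2)--(3) never arises, since the elimination of the $K$ constrained directions is done up front by projecting onto $Z$, and the sign control you mention follows immediately from the negative definiteness of $A=Z^TDZ$.
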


\begin{proof}
Let $\emph{R}$ be the $M \times K$ matrix that has full row rank. $M-K$ is the dimension of the null space. Let $z_i$ be the basis of the null space of $\emph{R}$, where $i = 1, 2,...,M-K$. Collect all $z_i$ to form the matrix $Z = [z_1, z_2,...,z_{M-K}]$. Then, we let $D$ be the Hessian matrix of our utility function in \eqref{alpha-fairness} and let $\mathbf{b} = \frac{\partial U}{\partial x \partial \beta}$. According to \cite{tang2006counter}, we can get that

\begin{equation}
\frac{\partial E}{\partial \beta} = \mathbf{1}^T Z(Z^T DZ)^{-1}Z^Tb.
\end{equation}

Let $A = Z^T D Z$, we can further infer that the conditions for that efficiency are non-increasing as $\alpha$ increases.
\end{proof}

We find efficiency is non-increasing with fairness as shown in Theorem \ref{theorem Efficiency non-increasing property}. In other words, there exists a tradeoff between efficiency and fairness when the conditions in Theorem \ref{theorem Efficiency non-increasing property} are satisfied. Furthermore, we discuss two scenarios that the tradeoff between efficiency and fairness doesn't exist. More specifically, we discuss the scenarios in which the allocation result can be the most efficient and fairest at the same time. As $\alpha$ is the measure of fairness, the fairest allocation scheme is max-min fairness when $\alpha$ approaches $\infty$. Thus, the equal-weighted dominant share is the fairest allocation result. To get the most efficient allocation result, it's essential to ensure as many resource constraints are tight as possible. We divide it into two scenarios: $K<M$ and $K=M$.

\begin{theorem} [Fairest and most efficient allocation scheme (\uppercase\expandafter{\romannumeral1})]
\label{counter-example11}
The equal-weighted dominant share scheme makes $K=M$ constraints tight if and only when 

\begin{equation}
    \sum_{i=1}^M\frac{\gamma_{i}^{<k>}}{\mu_i} = C,
\end{equation}
for any data analyst $i$ who requests $K$ resources, and $C$ is a constant.
\end{theorem}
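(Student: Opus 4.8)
The plan is to substitute the equal-weighted dominant share allocation directly into the $K$ resource-availability constraints \eqref{resource availability} and read off the algebraic condition under which all of them simultaneously hold with equality. First I would pin down what "equal-weighted dominant share scheme" means here: since maximizing the $\alpha$-fairness objective \eqref{alpha-fairness} drives the allocation toward max-min fairness on the weighted dominant shares $\mu_i x_i T(t_i) l_i$ as $\beta \to \infty$, the fairest scheme equalizes them, i.e.\ $\mu_i x_i T(t_i) l_i = s$ for every analyst $i$ and a single common value $s > 0$, equivalently $x_i T(t_i) l_i = s/\mu_i$.

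Next I would plug this into the $k$-th constraint $\sum_{i=1}^M \gamma_{i}^{<k>} x_i T(t_i) l_i \le 1$, which collapses to $s \sum_{i=1}^M \gamma_{i}^{<k>}/\mu_i \le 1$ for each $k \in [1,K]$, so that constraint $k$ is tight exactly when $s \sum_{i=1}^M \gamma_{i}^{<k>}/\mu_i = 1$. The forward direction is then immediate: if all $K$ constraints are tight under this scheme with common weighted share $s$, then $\sum_{i=1}^M \gamma_{i}^{<k>}/\mu_i = 1/s$ for every $k$, hence this quantity is a constant $C$ independent of $k$ and $s = 1/C$. The converse is equally direct: if $\sum_{i=1}^M \gamma_{i}^{<k>}/\mu_i = C$ for all $k$, choosing the common weighted dominant share $s = 1/C$ makes every constraint read $sC = 1$, so all $K$ of them are tight at once; together with $K = M$ and the null-space dimension count used in Theorem \ref{theorem Efficiency non-increasing property}, this is precisely the zero-waste, most-efficient allocation, which establishes both the "fairest" and "most efficient" claims of the statement.

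The step I expect to need the most care is the degrees-of-freedom bookkeeping: the equalization condition leaves only the single scalar $s$ free, so generically at most one resource constraint can be driven tight, and forcing all $K$ of them to be tight forces the $K$ numbers $\{\sum_{i=1}^M \gamma_{i}^{<k>}/\mu_i\}_{k=1}^K$ to coincide — I would spell this out explicitly rather than leave it implicit. I would also note that $\mu_i = \max_k \gamma_{i}^{<k>}$ makes each ratio $\gamma_{i}^{<k>}/\mu_i \in (0,1]$ well defined under the standing assumption that every one of the $M$ analysts requests all $K$ blocks, and briefly check that the resulting $s = 1/C$ is feasible (positive, with induced $x_i \ge 0$ as in \eqref{sub-problem1}). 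The degenerate situation where some analyst omits a block, so $\gamma_{i}^{<k>} = 0$ and that analyst simply drops out of the $k$-th sum, is a minor bookkeeping extension that I would remark on but not belabor.
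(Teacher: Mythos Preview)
Your proposal is correct and takes essentially the same approach as the paper: set the equal weighted dominant share $\mu_i x_i T(t_i) l_i$ to a common constant, substitute into the resource constraints \eqref{resource availability}, and read off that tightness of all $K$ constraints is equivalent to $\sum_{i=1}^M \gamma_i^{<k>}/\mu_i$ being independent of $k$. Your write-up is in fact more complete than the paper's, which only sketches the forward direction and defers to a two-analyst figure; you spell out both implications and handle the feasibility and degrees-of-freedom bookkeeping explicitly.
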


\begin{proof}
The detailed proof can be found in Appendix \ref{proof of theorem 6}.
\end{proof}

\begin{theorem} [Fairest and most efficient allocation scheme (\uppercase\expandafter{\romannumeral2})]
\label{counter-example22}
The equal-weighted dominant share scheme makes $K<M$ constraints tight if and only if when at least one data analyst $i$'s dominant share $\mu_i x_i = 0$, which means at least one data analyst is allocated no dominant share. 
\end{theorem}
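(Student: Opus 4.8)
The plan is to read this as the $K<M$ counterpart of Theorem~\ref{counter-example11} and to combine a characterization of the two ingredients with a dimension count. Recall from Section~\ref{subsec:tradeoff} that ``most efficient'' means no privacy budget is idle, i.e.\ every resource constraint in \eqref{resource availability1} is tight, while ``fairest'' means the $\alpha$-fairness objective \eqref{alpha-fairness} is maximized with $\beta\to\infty$, which by progressive filling equalizes the (weighted) dominant shares of all analysts that are not bottlenecked; absorbing the fixed coefficients $T(t_i)l_i$ into the demands, on the set $S$ of analysts receiving a positive allocation this reads $x_i=s/\mu_i$ for a common level $s>0$, and then resource $k$ is tight exactly when $s\sum_{i\in S}\gamma_i^{<k>}/\mu_i=1$. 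So the proof reduces to asking when a single level $s$ can saturate all $K$ resource rows while $K<M$.

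For the ``only if'' direction I would argue that a simultaneously fairest-and-most-efficient allocation of \eqref{sub-problem1} is a Pareto-optimal vertex of the polytope cut out by the $K$ resource constraints \eqref{resource availability1} and the $M$ non-negativity constraints \eqref{new constraint}, hence has $M$ linearly independent active constraints; if all $K$ resource constraints are among them, then because $K<M$ at least $M-K\ge 1$ of the active constraints are of the form $x_i=0$, so at least one analyst has dominant share $\mu_i x_i=0$. This is also what the equalization picture forces: unless the reduced sums $\sum_{i\in S}\gamma_i^{<k>}/\mu_i$ are all equal -- the situation of Theorem~\ref{counter-example11}, which is exactly the balanced case $K=M$ where these sums alone can form a vertex -- a single $s$ can saturate only the resource attaining $\max_k\sum_{i\in S}\gamma_i^{<k>}/\mu_i$, and with the demands fixed the only way to bring the remaining $K-1$ sums up to the binding one is to drop terms, i.e.\ to push some analyst out of $S$.

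For the ``if'' direction I would run this in reverse: once $M-K$ analysts carry zero dominant share, the active analysts index a square $K\times K$ subsystem $R'x'=\mathbf 1$ of the tight resource equations, which is generically invertible and hence solvable; one then checks, mimicking the proof of Theorem~\ref{counter-example11} on the reduced index set (the analogue of $\sum_{i\in S}\gamma_i^{<k>}/\mu_i=\mathrm{const}$ now over the $K$ active analysts), that the solution can be taken equal-weighted on $S$ and non-negative, so the equal-weighted dominant-share scheme does make all $K$ constraints tight. Combining both directions yields the equivalence and the qualitative payoff: in the over-subscribed regime $K<M$ one cannot be fairest and most efficient simultaneously without completely excluding at least one analyst, the sharpest manifestation of the fairness--efficiency tradeoff.

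I expect the main obstacle to be twofold. First, making rigorous the identification of the ``equal-weighted dominant-share scheme'' with the progressive-filling max-min solution \emph{and} with a vertex of the feasible set: pure max-min points are Pareto-optimal but not automatically vertices, so the step where efficiency forces the allocation off the generic max-min point and onto a vertex (thereby zeroing out $M-K$ analysts) needs care. Second, handling the coupling $\gamma_i^{<k>}=\sum_j\gamma_{ij}^{<k>}x_{ij}$ from \eqref{demand composition}--\eqref{One-or-more property} in the ``if'' direction, since the reduced system must be solved consistently with an admissible per-pipeline allocation, not merely in the aggregate $x_i$; a full-rank/genericity assumption on the reduced constraint matrix, analogous to the one made for $R$ in Theorem~\ref{theorem Efficiency non-increasing property}, is what I would invoke to close the converse.
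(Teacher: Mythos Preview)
Your proposal is correct and uses essentially the same approach as the paper: the feasible set is a convex polytope cut out by the $K$ resource constraints and the $M$ non-negativity constraints, and an efficient equal-weighted allocation must lie at a vertex, so with only $K<M$ resource constraints available at least $M-K$ of the active constraints must be $x_i=0$. Your write-up is in fact considerably more detailed than the paper's (which is a three-sentence sketch plus a picture and does not separately argue the ``if'' direction or the vertex identification you flag as an obstacle), but the underlying dimension-count argument is the same.
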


\begin{proof}
The detailed proof can be found in Appendix \ref{proof of theorem 7}.
\end{proof}

In the first scenario, DPBalance outputs equal share for each data analyst when each data analyst has the same weighted dominant share. In the second scenario, DPBalance outputs zero share for data analysts with zero weighted dominant share. Therefore, DPBalance covers these two special scenarios when conditions in Theorem \ref{counter-example11} or Theorem \ref{counter-example22} are satisfied.
\newcommand{\figwidthone}{0.31\linewidth}
\begin{figure*}[t]
\centering
\subfigure[$\beta=0.5$]{
\includegraphics[width=\figwidthone]{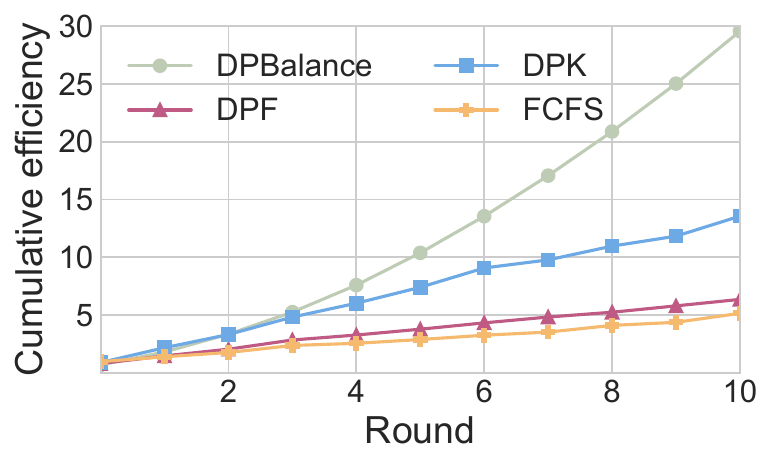}
}
\subfigure[$\beta=2.2$]{
\includegraphics[width=\figwidthone]{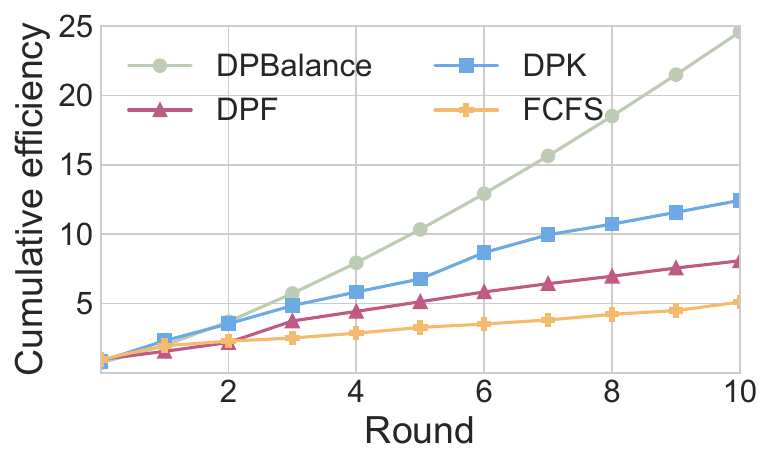}
}
\subfigure[$\beta=5.0$]{
\includegraphics[width=\figwidthone]{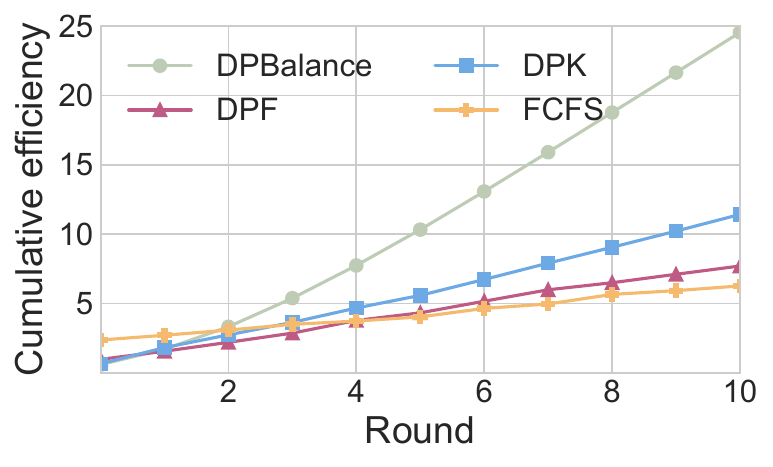}
}
\caption{Comparison of cumulative efficiency under different fairness preference settings}
\label{Cumulative efficiency}
\end{figure*}

\begin{figure*}[t]
\centering
\subfigure[$\beta=0.5$]{
\includegraphics[width=\figwidthone]{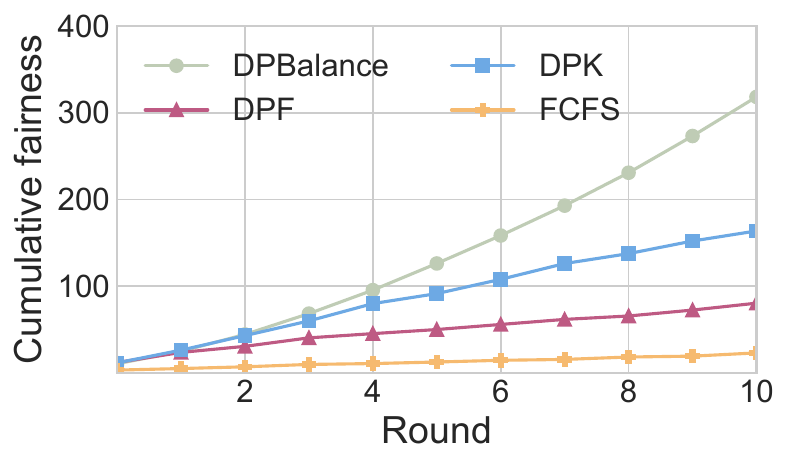}
}
\subfigure[$\beta=2.2$]{
\includegraphics[width=\figwidthone]{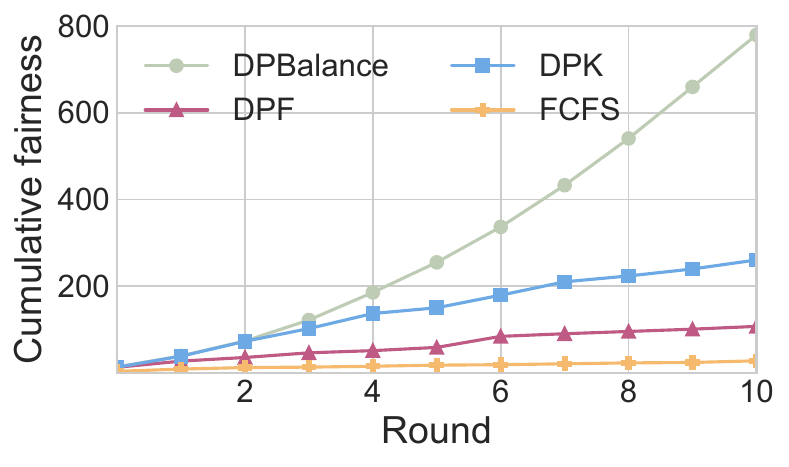}
}
\subfigure[$\beta=5.0$]{
\includegraphics[width=\figwidthone]{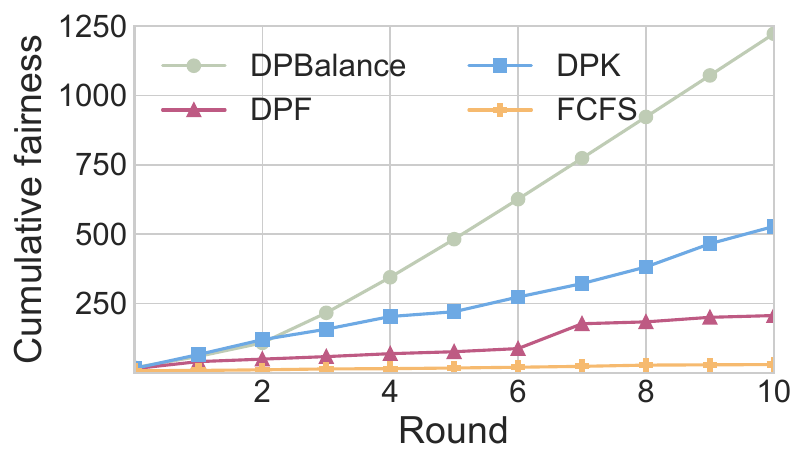}
}
\caption{Comparison of cumulative fairness under different fairness preference settings }
\label{Cumulative fairness}
\end{figure*}

\begin{figure}[ht]
\centering
\subfigure[Round level (random round)]{
\label{Round tradeoff (random round)}
\includegraphics[width=4.12cm]{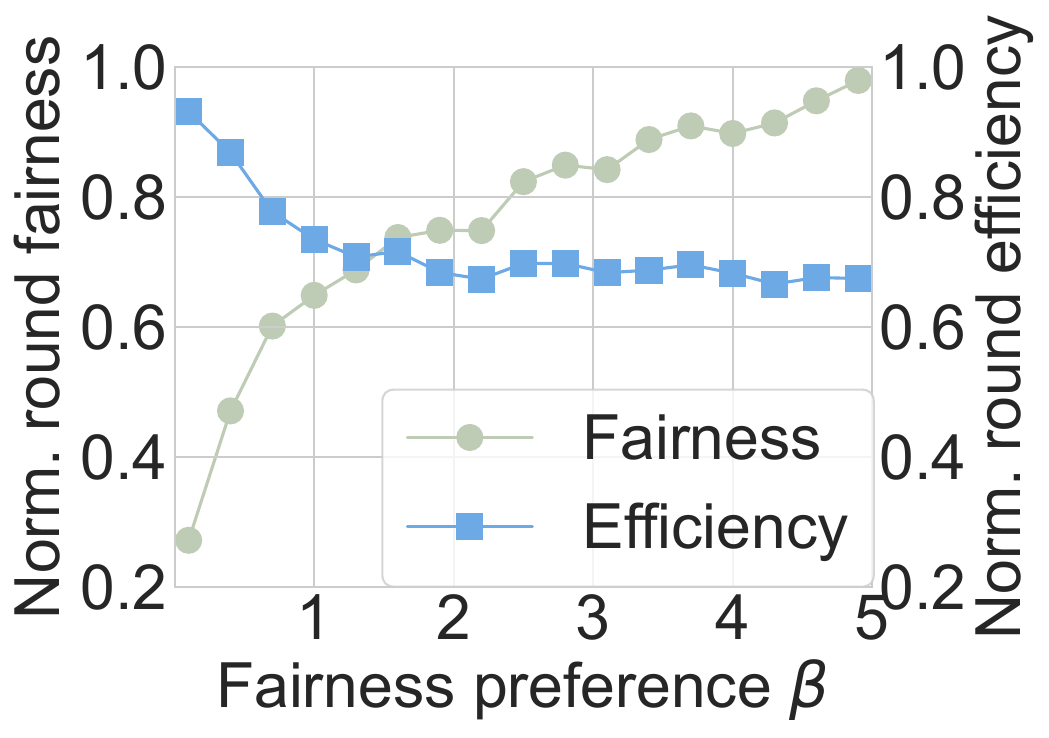}
}
\subfigure[Cumulative tradeoff (last round)]{
\label{Cumulative tradeoff (last round)}
\includegraphics[width=4.12cm]{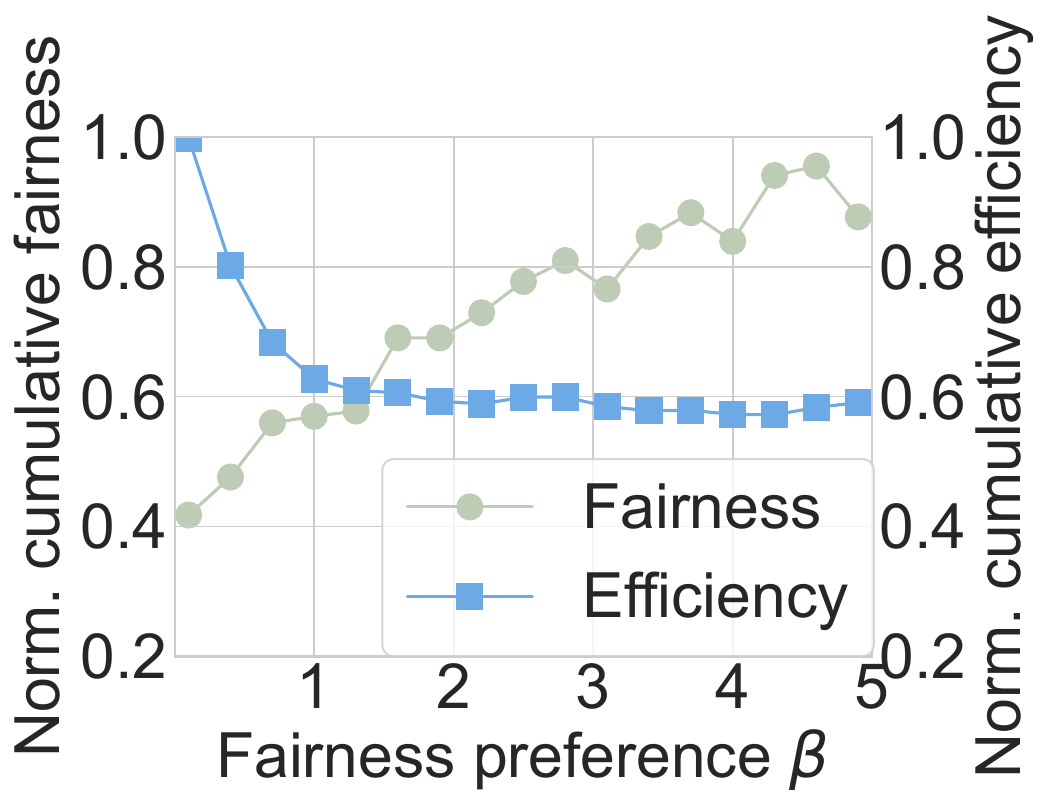}
}
\caption{Validation of efficiency-fairness tradeoff in two rounds}
\label{Round/ Cumulative Efficiency Fairness tradeoff}
\end{figure}

\section{Evaluation}
\label{Evaluation}
In this section, we evaluate DPBalance regarding its efficiency and fairness. We aim to answer the following questions:

\textbf{Q1:} How does DPBalance compare to other baseline scheduling methods?

\textbf{Q2:} How does fairness preference $\beta$ affect dominant efficiency and dominant fairness?

\textbf{Q3:} Can we empirically validate the existence of the tradeoff between dominant efficiency and dominant fairness when the condition of Theorem \ref{theorem Efficiency non-increasing property} is satisfied?

\textbf{Metrics:} We use the following four metrics to measure the performance of DPBalance and baselines.

\begin{itemize}
    
    \item \emph{Cumulative efficiency} is the sum of the dominant efficiency defined in \eqref{Dominant efficiency} of all rounds.
    
    \item \emph{Cumulative fairness} is the sum of the dominant fairness defined in \eqref{Dominant fairness} of all rounds.
    
    \item \emph{Round efficiency} is the dominant efficiency defined in \eqref{Dominant efficiency} of current training round.
   
    \item \emph{Round fairness} is the dominant fairness defined in \eqref{Dominant fairness} of current training round.
\end{itemize}

Round efficiency and round fairness are designed to testify if the tradeoff for a single round exists.

\textbf{Baselines:} We compare DPBalance with three baseline scheduling algorithms, DPK\cite{tholoniat2022packing}, DPF \cite{luo2021privacy} and First-Come-First-Serve (FCFS). DPK allocates resources to the pipelines with lowest weight-to-demand ratio. DPF allocates resources to pipelines with smallest dominant share. 

First-Come-First-Serve (FCFS) allocates resource to pipelines who arrive first.

\textbf{Simulation Setup:}
For the resource side, we create $100$ edge devices and each device's global privacy budget $\epsilon_g$ follows the Uniform distribution, namely $\epsilon_g \sim U(1.0, 1.5)$. For each device, two new blocks are created every 10 seconds. For the data analysts side, we assume 6 data analysts, each of whom with 25 pipelines, come every 10 seconds following the Poisson process. Depending on the amount of demanding privacy resources, we generate two types of pipelines, 75\% mice and 25\% elephant pipelines following the setting in DPF \cite{luo2021privacy}. For mice pipeline, it demands data blocks with privacy demand $\epsilon \thicksim U(0.005, 0.015)$. For elephant pipeline, it demands data blocks with privacy demand $\epsilon \thicksim U(0.095, 0.105)$. For each pipeline, it demands the latest 10 data blocks with a probability of 0.25 or the latest 1 data block with a probability of 0.75. After being generated, all pipelines are shuffled and put into different data analysts, each of whom consists of $25$ pipelines. For each data analyst, it either demands $0.2$ of all devices or all the devices with probability $0.5$ and $0.5$ respectively.

\newcommand{\figsize}{0.30\linewidth}

\subsection{Comparison with baselines (\textbf{Q1})}
We compare DPBalance with other baseline scheduling methods. We show the cumulative round efficiency and cumulative round fairness in Fig. \ref{Cumulative efficiency} and Fig. \ref{Cumulative fairness} respectively. 

We choose three different fairness preference $\beta$ to represent efficiency-preferred ($\beta = 0.5$), efficiency-fairness-unbiased ($\beta = 2.2$), fairness-preferred ($\beta = 5.0$) situation. We take the last round (round 10) as an example. For efficiency, DPBalance outperforms than other three baselines by improving $1.44 \times \sim 3.49 \times$, $1.41\times \sim 3.04 \times$, $1.67\times \sim 2.28 \times$ on average under the efficiency-preferred, efficiency-fairness-unbiased and fairness-preferred situation respectively. For fairness, DPBalance outperforms the other three baselines by improving $1.37 \times \sim 9.66 \times$, $1.77\times \sim 15.58 \times$, $1.82\times \sim 24.32 \times$ on average under the efficiency-preferred, efficiency-fairness-unbiased and fairness-preferred situation respectively. Therefore, we can get DPBalance outperforms the other three scheduling baselines on cumulative round efficiency and cumulative round fairness under different fairness preferences.

\subsection{Impact of fairness preference $\beta$ (\textbf{Q2})}

We show the round efficiency, round fairness, cumulative efficiency, and cumulative fairness as fairness preference $\beta$ increases in Fig. \ref{Round/ Cumulative Efficiency Fairness tradeoff}. As fairness preference $\beta$ grows from $0$ to $5$, which means allocation becomes more fair-preferred, round efficiency and cumulative efficiency decrease $48\%$ and $38\%$ respectively. As fairness preference $\beta$ grows from $0$ to $5$, round fairness and cumulative fairness increase $257\%$ and $119\%$ respectively. Therefore, $\beta$ can serve as a controlling knob for the system manager to adjust fairness and efficiency.

\subsection{Validation of tradeoff (\textbf{Q3})}

The dominant efficiency and dominant fairness are normalized for a better comparison. We present the results in one randomly selected round in Fig. \ref{Round tradeoff (random round)}. Round efficiency decreases and round fairness increases as fairness preference $\beta$ increases from $0$ to $5$. In addition, we show the cumulative efficiency and cumulative fairness of the last round in Fig. \ref{Cumulative tradeoff (last round)}. Similarly, cumulative fairness increases, and cumulative efficiency decreases as fairness preference $\beta$ increases. Therefore, the tradeoff between dominant efficiency and dominant fairness does exist when conditions in Theorem \ref{theorem Efficiency non-increasing property} are satisfied, which verifies the proof of the tradeoff in Section \ref{subsec:tradeoff}.

\section{Conclusion}
\label{Conclusion}
FLaaS is an emerging cloud-based service that facilitates FL-based model training over crowdsourced data owners. Due to the intrinsic properties of DP, privacy budget set by data owners have to be carefully allocated to different data analysts to improve system efficiency and data analyst fairness. Existing privacy budget scheduling mechanisms consider either efficiency or fairness only. This paper completes the missing piece by proposing a new mechanism, DPBalance, that jointly optimizes efficiency and fairness. We prove theoretically that DPBalance guarantees four key economic properties and there exists an fairness-efficiency tradeoff in practical situations. Extensive experiments further demonstrate the superiority of DPBalance in efficiency and fairness performance.

\begin{appendices}
\section{proof of theorem 1}
\label{proof of theorem 1}
The first optimization problem in Eq.(\ref{global opt}) along with first constraint is equal to \begin{equation}
\begin{aligned}
    &\max \Phi_{\lambda}(x) = l\Big(f_{\beta}(x)\Big) + \lambda l\Big(\sum_{i=1}^m \mu_i x_i T(t_i) l_i \Big), \\
    &l(y) = sgn(y) log(|y|), \\
    &f_{\beta} = sgn(1-\beta)\bigg(\sum_{i=1}^m \Big(\frac{\mu_i x_i T(t_i) l_i}{\sum_{i=1}^m \mu_i x_i T(t_i) l_i} \Big)^{1-\beta} \bigg)^{\frac{1}{\beta}}.\\
\end{aligned}
\end{equation}
We first set $\beta > 1$ and other cases can be proved by symmetry. Let $x$ and $x'$ be 2 vectors representing 2 allocation schemes, and set $x'$ Pareto-dominates $x$. Then, we can get

\begin{equation}
    \begin{aligned}
        & \Phi_{\lambda}(x')-\Phi_{\lambda}(x) \\
        & = l\Big(f_{\beta}(x')\Big) - l\Big(f_{\beta}(x)\Big) \\
        &+ \lambda \bigg(l\Big(\sum_{i=1}^m \mu_i x'_i T(t_i) l_i\Big) - l\Big(\sum_{i=1}^m \mu_i x_i T(t_i) l_i\Big) \bigg)\\
        & = - \Bigg(log\Big(f_{\beta}(x')\Big) - log\Big(f_{\beta}(x)\Big) \Bigg)\\
        &+ \lambda \bigg(log\Big(\sum_{i=1}^m \mu_i x'_i T(t_i) l_i\Big) - log\Big(\sum_{i=1}^m \mu_i x_i T(t_i) l_i\Big) \bigg)\\
        & = - \Bigg(log\Big(f_{\beta}(x')\Big) - log\Big(f_{\beta}(x)\Big) \Bigg)\\
        &+ \lambda \bigg(log\Big((1+\delta)\sum_{i=1}^m \mu_i x_i T(t_i) l_i\Big) - log\Big(\sum_{i=1}^m \mu_i x_i T(t_i) l_i\Big) \bigg)\\
        & = - \Bigg(log\Big(f_{\beta}(x')\Big) - log\Big(f_{\beta}(x)\Big) \Bigg) + \lambda(1+\delta). \\
    \end{aligned}
\end{equation}
(a) If $x'$ is more fair than $x$, we can get $f_{\beta}(x') > f_{\beta}(x)$ because $f_{\beta}$ is a measure function of fairness. Thus, we can get
\begin{equation}
\begin{aligned}
    &- \bigg(log\Big(f_{\beta}(x')\Big) - log\Big(f_{\beta}(x)\Big) \bigg) > 0 \\
    & \Phi_{\lambda}(x')-\Phi_{\lambda}(x) \\ 
    & = - \bigg(log\Big(f_{\beta}(x')\Big) - log\Big(f_{\beta}(x)\Big) \bigg) + \lambda(1+\delta) > 0 \\
\end{aligned}
\end{equation}

(b) If $x'$ is less fair than $x$,
\begin{equation}
    \begin{aligned}
        & \Phi_{\lambda}(x')-\Phi_{\lambda}(x) \\ 
        & = -log \Big( \sum_{i=1}^m \big( \frac{\mu_i x'_i T(t_i) l_i}{\sum_{i=1}^m \mu_i x'_i T(t_i) l_i} \big)^{1-\beta} \Big)^{\frac{1}{\beta}}  \\
        & +log \Big( \sum_{i=1}^m \big( \frac{\mu_i x_i T(t_i) l_i}{\sum_{i=1}^m \mu_i x_i T(t_i) l_i} \big)^{1-\beta} \Big)^{\frac{1}{\beta}} + \lambda log(1+\delta). \\
    \end{aligned}
\end{equation}

Let $a_i = \mu_i x_i T(t_i) l_i$, $a'_i = \mu_i x'_i T(t_i) l_i$. Thus, we get $a'_i = (1+\delta)a_i$. Thus, we get
\begin{equation}
    \begin{aligned}
        & \Phi_{\lambda}(x')-\Phi_{\lambda}(x) \\ 
        & = -\frac{1}{\beta} log\Bigg(\frac{\sum_{i=1}^m \bigg(\frac{a'_i}{\sum_{i=1}^m a'_i}\bigg)^{1-\beta}}{\sum_{i=1}^m \bigg(\frac{a_i}{\sum_{i=1}^m a_i}\bigg)^{1-\beta}}\Bigg) + \lambda log(1+\delta). \\
        & = -\frac{1}{\beta} log\Bigg(\frac{\sum_{i=1}^m \bigg(\frac{a'_i}{\sum_{i=1}^m a_i(1+\delta)}\bigg)^{1-\beta}}{\sum_{i=1}^m \bigg(\frac{a_i}{\sum_{i=1}^m a_i}\bigg)^{1-\beta}}\Bigg) + \lambda log(1+\delta). \\
        & = -\frac{1}{\beta} log\Bigg( \frac{\sum_{i=1}^m (a'_i)^{1-\beta}}{\sum_{i=1}^m (a_i)^{1-\beta}}  \cdot \bigg( \frac{1}{1+\delta} \bigg)^{1-\beta} \Bigg) + \lambda log(1+\delta). \\
        & = -\frac{1}{\beta} log\Bigg( \frac{\sum_{i=1}^m (a'_i)^{1-\beta}}{\sum_{i=1}^m (a_i)^{1-\beta}} \Bigg) -\frac{\beta-1}{\beta} log(1+\delta) + \lambda log(1+\delta). \\
        & = -\frac{1}{\beta} log\Bigg( \frac{\sum_{i=1}^m (a'_i)^{1-\beta}}{\sum_{i=1}^m (a_i)^{1-\beta}} \Bigg) + (\lambda - \frac{\beta-1}{\beta})log(1+\delta). \\
    \end{aligned}
\end{equation}
We have $a'_i \ge a_i$ and $\beta > 1$. Thus,
\begin{equation}
\begin{aligned}
    & (a'_i)^{1-\beta} \le (a_i)^{1-\beta}\\
    & -\frac{1}{\beta} log\Bigg( \frac{\sum_{i=1}^m (a'_i)^{1-\beta}}{\sum_{i=1}^m (a_i)^{1-\beta}} \Bigg) > 0\\
\end{aligned}
\end{equation}

When ${\lambda}  \ge {\frac{1-\beta}{\beta}}$, we have 
\begin{equation}
    (\lambda - \frac{\beta-1}{\beta})log(1+\delta)>0.
\end{equation}

Thus, we can get $\Phi_{\lambda}(x')-\Phi_{\lambda}(x) \ge 0$.

For the completeness of proof, we have to prove if $(\lambda - \frac{\beta-1}{\beta})log(1+\delta)>0$ would let $\Phi_{\lambda}(x')-\Phi_{\lambda}(x)<0$. Now we consider a new case: We extend the length of the vector $x$ to $m+1$. Let $x_i=1$, $x_{m+1} = m$, $x'_i=x_i$, $x'_{m+1}=x_{m+1}+2\delta(\sum_{i=1}^m x_i)$, where $i \in [1,m]$. Thus, we can get
\begin{equation}
\begin{aligned}
    & \Phi_{\lambda}(x') - \Phi_{\lambda}(x) \\
    & = -\frac{1}{\beta} log\Bigg( \frac{\sum_{i=1}^{m+1} (x'_i)^{1-\beta}}{\sum_{i=1}^{m+1} (x_i)^{1-\beta}} \Bigg) + (\lambda - \frac{\beta-1}{\beta})log(1+\delta). \\
    & = -\frac{1}{\beta} log\Bigg( \frac{n+(n+2n\delta)^{1-\beta}}{n+n^{1-\beta}} \Bigg) + (\lambda - \frac{\beta-1}{\beta})log(1+\delta). \\
    & \le -\frac{1}{\beta} log\Bigg( \frac{n}{n+n^{1-\beta}} \Bigg) + (\lambda - \frac{\beta-1}{\beta})log(1+\delta). \\
    & = -\frac{1}{\beta} log\Bigg( \frac{1}{1+n^{-\beta}} \Bigg) + (\lambda - \frac{\beta-1}{\beta})log(1+\delta). \\
\end{aligned}
\end{equation}

Let $\delta = \frac{1}{2} \bigg( \Big( 1+n^{-\beta} \Big)^{-\frac{1}{\beta} \frac{1}{(\lambda-\frac{\beta-1}{\beta})}} \bigg)>0.$
Under this condition, we can get
\begin{equation}
    \Phi_{\lambda}(x') - \Phi_{\lambda}(x) \le -\frac{1}{\beta} log\Bigg( \frac{1}{1+n^{-\beta}} \Bigg) + (\lambda - \frac{\beta-1}{\beta})log(1+\delta) <0. 
\end{equation}
To keep first optimization problem in Eq.(\ref{global opt}) positive, it's essential to add the absolute symbol. Thus, to keep Pareto Efficiency, we have
\begin{equation}
    \beta>0, \vert{\lambda} \vert \ge \vert{\frac{1-\beta}{\beta}} \vert.
\end{equation}

\section{proof of theorem 2}
\label{proof of theorem 2}
(a) When $\beta > 1$ and $\lambda = \frac{\beta-1}{\beta}$, the original optimization problem in Eq.(\ref{opt 1}) can be transformed into Eq.(\ref{alpha-fairness}).

Combined with first constraint in Eq.(\ref{global opt}), the Lagrangian function can be represented as

\begin{equation}
    L = \sum_{i=1}^m \frac{(\mu_i x_i T(t_i) l_i)^{1-\beta}}{1-\beta} - \sum_{k=1}^K \lambda_k (\sum_{i=1}^m \gamma_{ik}x_i T(t_i) l_i - 1).
\end{equation}

We can get 
\begin{equation}
\label{e27}
    \Big( \mu_i T(t_i) l_i \Big)^{1-\beta} x
    ^{-\beta} = \sum_{k=1}^K \lambda_k \gamma_{ik} T(t_i) l_i, \forall k,
\end{equation}

and

\begin{equation}
\label{e28}
    \sum_{k=1}^K \sum_{i=1}^m \lambda_k \gamma_{ik} x_i = \sum_{k=1}^K \lambda_k.
\end{equation}

From Eq.(\ref{e27}), we can get

\begin{equation}
\label{e29}
    \Big( \mu_i T(t_i) l_i \Big)^{-\beta} x^{-\beta} = \sum_{k=1}^K \lambda_k \frac{\gamma_{ik}}{\mu_i} \le \sum_{k=1}^K \lambda_k.
\end{equation}

Combine Eq.(\ref{e27}) and Eq.(\ref{e28}), we can get

\begin{equation}
    \sum_{k=1}^K \sum_{i=1}^m \lambda_k \gamma_{ik} x_i = \sum_{i=1}^m (\mu_i T(t_i)l_i)^{1-\beta} x^{1-\beta}.
\end{equation}

and

\begin{equation}
    \sum_{i=1}^m (\mu_i T(t_i)l_i)^{1-\beta} x^{1-\beta} = \sum_{k=1}^K \lambda_k.
\end{equation}

Combine with Eq.(\ref{e29}), we can get
\begin{equation}
    \Big( \mu_i T(t_i) l_i \Big)^{-\beta} x_i^{-\beta} \le \sum_{i=1}^m \Big( \mu_i T(t_i) l_i \Big)^{1-\beta} x_i^{1-\beta}.
\end{equation}

Let $a_i = T(t_i) l_i$, then we can get
\begin{equation}
    (\mu_i a_i)^{-\beta} x_i^{-\beta} \le \sum_{i=1}^m (\mu_i a_i)^{1-\beta} x^{1-\beta}.
\end{equation}
Thus, we can get

\begin{equation}
\begin{aligned}
    &\min_i a_i \mu_i x_i \ge \frac{1}{m}\\
    &\min_i = \mu_i x_i T(t_i) l_i \ge \frac{1}{m}.
\end{aligned}
\end{equation}
Thus, under this condition, Sharing Incentive property is satisfied.

(b) When $0<\beta<1$ and $\lambda = \frac{\beta-1}{\beta}$, we use $\gamma_{i}$ to represent $\gamma_{ik}$ and base on Eq.(\ref{e27}) we can get 
\begin{equation}
\label{e35}
    \Big( \mu_1 T(t_1) l_1 \Big)^{1-\beta} x_1^{-\beta} = \sum_{k=1}^K \lambda_k \gamma_1 T(t_1) l_1.
\end{equation}

From Eq.(\ref{e27}) and Eq.(\ref{e35}), we can get

\begin{equation}
    x_i = x_1 \Big( \frac{\gamma_i a_i}{\gamma_1 a_1} \Big)^{-\frac{1}{\beta}} \Big( \frac{\mu_1 a_1}{\mu_i a_i} \Big)^{\frac{\beta-1}{\beta}},
\end{equation}

where $a_i = T(t_i) l_i$.

Combine with Eq.(\ref{resource availability}), we get
\begin{equation}
\label{e37}
    x_i = \frac{(\gamma_i a_i)^{-\frac{1}{\beta}}}{(\mu_i a_i)^{\frac{\beta-1}{\beta}} \sum_{i=1}^m \Big(\frac{\gamma_i}{\mu_i}\Big)^{\frac{\beta-1}{\beta}}}
\end{equation}

Thus, we can get

\begin{equation}
    \mu_i a_i x_i = \frac{(\gamma_i a_i)^{-\frac{1}{\beta}}}{(\mu_i a_i)^{-\frac{1}{\beta}} \sum_{i=1}^m \Big(\frac{\gamma_i}{\mu_i}\Big)^{\frac{\beta-1}{\beta}}} < \frac{1}{m}.
    \label{e43}
\end{equation}
Therefore, there exists some data analysts whose dominant share of allocation is less than equal share, which doesn't satisfy Sharing Incentive property.

(c) When $\lambda =0$ for any $\beta$, which means there only left fairness part of utility. In this case, each data analyst would only get equal weighted dominant share, which means $\sum_{i=1}^m a_i \mu_i x_i >1$ and there is no data analyst whose weighted dominant share is less than $\frac{1}{m}$. Thus, Sharing Incentive property should be satisfied.

(d) When $|\lambda| = \infty$ for any $\beta$, which means efficiency would be heavily weighted. 

Suppose there is a case: 2 data analysts require 2 data blocks. To simplify the case, we assume $T(t_i)=1$ and $l_i = 1$. And data analyst 1's normalized demand $D_1=(1, \gamma), \gamma < 1$, while data analyst 2's normalized demand $D_2 = (0, 1)$. Thus, we can get $\mu_1 = \mu_2 = 1$. As $\lambda = \infty$, we can simply ignore the fairness part. Thus, the original optimization problem are equal to 
\begin{equation}
    \max x_1 + x_2.
\end{equation}
And we can get the optimal solution $x_1 = 1, x_2 = 1-\gamma$ and the dominant share of them is $\mu_1 x1 = 1, \mu_2 x_2 = 1-\gamma$. When $\gamma > \frac{1}{2}$, we can get $\mu_2 x_2 < \frac{1}{2}=\frac{1}{n}$, which make Sharing Incentive dissatisfied.

\section{proof of theorem 3}
\label{proof of theorem 3}
(a) When $\beta > 1$ and $\lambda = \frac{\beta-1}{\beta}$, we assume data analyst i envies data analyst j, namely $\mu_j x_j T(t_j) l_j > \mu_i x_i T(t_i) l_i$. Thus, we can get at least there exist one resource $k$,
\begin{equation}
\label{e40}
    \gamma_{jk} x_j T(t_j) l_j \ge \gamma_{ik} x_i T(t_i) l_i. 
\end{equation}

Combine with Eq.(\ref{e27}), we can get
\begin{equation}
\begin{aligned}
    & \Big( \mu_i T(t_i) l_i \Big)^{1-\beta} x_i
    ^{1-\beta} = \sum_{k=1}^K \lambda_k \gamma_{ik} x_i T(t_i) l_i \\
    & < \sum_{k=1}^K \lambda_k \gamma_{jk} x_j T(t_j) l_j = \Big( \mu_j T(t_j) l_j \Big)^{1-\beta} x_j^{1-\beta}.
\end{aligned}
\end{equation}
As $\beta >1$, we can get
\begin{equation}
    \gamma_{jk} x_j T(t_j) l_j < \gamma_{ik} x_i T(t_i) l_i,
\end{equation}
which is opposite to Eq.(\ref{e40}). Thus, under this condition, Envy-Freeness is satisfied.

(b) When $0<\beta<1$ and $\lambda = \frac{\beta-1}{\beta}$, we can get $\gamma_{jk} x_j T(t_j) l_j > \gamma_{ik} x_i T(t_i) l_i$ under case in (a). Thus, under this condition, Envy-Freeness is not satisfied.

(c) When $\lambda =0$ for any $\beta$, which means there only left fairness part of utility. Under this condition, each data analyst gets equal weighted dominant share,

\begin{equation}
    \mu_1 x_1 T(t_1) l_1 = \mu_2 x_2 T(t_2) l_2 =...= \mu_m x_m T(t_m) l_m.
\end{equation}
Thus, under this condition, Envy-Freeness is satisfied.

(d) When $|\lambda| = \infty$ for any $\beta$, which means efficiency would be heavily weighted. In the same case of (d) in Appendix B. As data analyst 1's dominant share is 1 while data analyst 2's dominant share is $1-\gamma$. Thus, under this condition, Envy-Freeness is not satisfied.

\begin{figure*}[ht]
\centering
\subfigure[An counter-example for $K=M$: 2 data analysts request for 2 data blocks.]{
\label{counter-example for K=M}
\includegraphics[width=5.0cm]{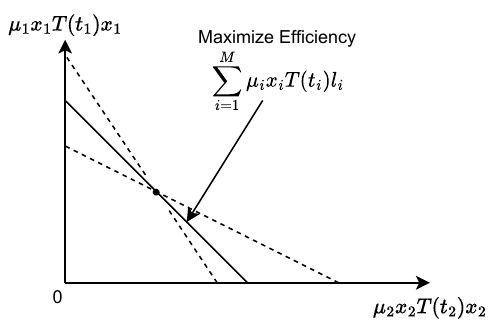}
}
\subfigure[An counter-example for $K<M$: 2 data analysts request for 2 data blocks. Left one represents that one resource constraint is tight and the optimal allocation is unique; Right one represents that one resource constraint is tight and the optimal allocation is not unique.]{
\label{counter-example for K<M}
\includegraphics[width=9cm]{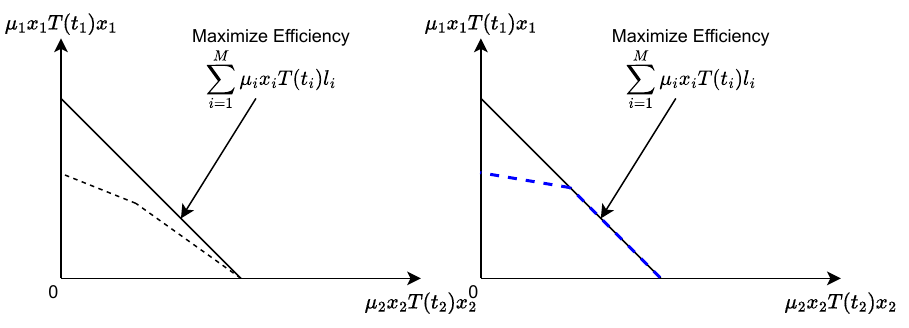}
}
\caption{Counter-examples}
\label{Counter-example}
\end{figure*}

\section{proof of theorem 4}
\label{proof of theorem 4}
Let's assume that the data analyst has three basic ways of lying.

(1) ($\mu_i' > \mu_i$) Data analysts only lie about more dominant share in the training demands. 

(2) ($\gamma_i' > \gamma_i$) Data analysts only lie about more non-dominant share in the training demands. 

(3) ($\mu_i' > \mu_i, \gamma_i' > \gamma_i, \frac{\gamma_i'}{\mu_i'} = \frac{\gamma_i}{\mu_i}$) Data analysts lie about more dominant share and non-dominant share in proportion in the training demands. 

For more lying ways, they can be decomposed into 3 basic ways mentioned above. For example, if data analysts lie about more dominant share and non-dominant share out of proportion, it can be decomposed into (1) + (3) or (2) + (3).

For simplicity, we use $\mu$ and $\gamma$ to represent dominant share $\mu x$ and $\gamma x$ because $x=1$ can meet some pipeline's training demand. Now we consider 4 different scenarios:

(a) When $\beta > 1$ and $\lambda = \frac{\beta-1}{\beta}$, from Eq.(\ref{e43}), we can get

\begin{equation} 
\begin{aligned}
\label{e44}
    & \mu_j a_j x_j = \frac{(\frac{\gamma_j}{\mu_j})^{-\frac{1}{\beta}}}{ \sum_{i=1}^m \Big(\frac{\gamma_i}{\mu_i}\Big)^{\frac{\beta-1}{\beta}}} \\
    & = \frac{\gamma_j ^{-\frac{1}{\beta}}}{\frac{\gamma_j}{\mu_j} (\gamma_j)^{-\frac{1}{\beta}}  \sum_{i=1, i \neq j}^m \frac{\gamma_i}{\mu_i} (\gamma_i)^{-\frac{1}{\beta}} \Big( \frac{\mu_j}{\mu_i} \Big)^{-\frac{1}{\beta}}}.\\
\end{aligned}
\end{equation}

If data analyst $j$ lies about more dominant share $\mu_j$, it would get more weighted dominant share $\mu_j a_j x_j$.  

Combine with Eq.(\ref{e37}), we can get 

\begin{equation}
\label{e45}
    \begin{aligned}
        & \gamma_j a_j x_j = \frac{(\frac{\gamma_j}{\mu_j})^{\frac{\beta-1}{\beta}}}{ \sum_{i=1}^m \Big(\frac{\gamma_i}{\mu_i}\Big)^{\frac{\beta-1}{\beta}}} \\
        & = \frac{(\gamma_j)^{\frac{\beta-1}{\beta}}}{(\gamma_j)^{\frac{\beta-1}{\beta}} + \mu_j^{\frac{\beta-1}{\beta}} \sum_{i=1, i \neq j}^m  \Big(\frac{\gamma_i}{\mu_i}\Big)^{\frac{\beta-1}{\beta}}}. \\
    \end{aligned}
\end{equation}

Because $\beta > 1$, we can get if $\mu_j$ increases, data analyst $j$ would get less weighted non-dominant share $\gamma_j a_j x_j$. 

If data analyst $j$ lies about more non-dominant share $\gamma_j$, it wouldn't increase its utility. Even it gets more non-dominant share, it still wouldn't get more dominant share and increase its utility because of Definition 4.

If data analyst $j$ lies about more dominant share $\mu_j$ and non-dominant share $\gamma_j$ in proportion, it wouldn't increase its utility. As shown in Eq.(\ref{e44}) and Eq.(\ref{e45}), if $\frac{\gamma_j}{\mu_j}$ keeps fixed, $\mu_j a_j x_j$ and $\gamma_j a_j x_j$ also keep fixed.

Thus, under this condition, only weak Strategy Proofness property is satisfied. Only when data analyst lies more about its dominant share can it gets more dominant share, but it also puts data analyst at risk of losing non-dominant share.

(b) When $0<\beta<1$ and $\lambda = \frac{\beta-1}{\beta}$, if data analyst $j$ lies about more dominant share $\mu_j$, it would get more $\mu_j a_j x_j$ and $\gamma_j a_j x_j$ from Eq.(\ref{e44}) and Eq.(\ref{e45}).
Thus, both strong Strategy Proofness property and weak Strategy Proofness property are not satisfied. 

(c) When $\lambda =0$ for any $\beta$, which means there only left fairness part of utility. In this case, each data analyst gets equal weighted dominant share,
$$\mu_1 a_1 x_1 = \mu_2 a_2 x_2 =...= \mu_m a_m x_m.$$
For any data analyst $i$, we can get $\mu_i a_i x_i = \mu_1 a_1 x_1$. 

If data analyst $j$ only lies about more dominant share $\mu_j$, it would decrease $x_j$ and $\mu_j a_j x_j$ remains the same. Thus, it would gets more dominant share.

If data analyst $i$ only lies about more non-dominant $\gamma_j$, it wouldn't increase it utility even it gets more non-dominant share because of Definition 4.

If data analyst lies more about dominant share $\mu_j$ and $\gamma_j$ in proportion, $\mu_i a_i x_i$ remains the same because of $\mu_i a_i x_i = \mu_1 a_1 x_1$.

Thus, Weak Strategy Proofness property is satisfied.

(d) When $|\lambda| = \infty$ for any $\beta$, which means efficiency would be heavily weighted. We ignore the fairness part, and the optimization problem equals to 
\begin{equation}
\begin{aligned}
    & \max \sum_{i=1}^m \mu_i a_i x_i, \\
    & s.t. \sum_{i=1}^m \gamma_{ik}x_i T(t_i) l_i \le 1, k\in[1,K]. \\
\end{aligned}
\end{equation}

And the optimal solution is 
$$\mu_1 a_1 x_1 = \mu_2 a_2 x_2 =...= \mu_m a_m x_m.$$

From (c), we know Weak Strategy Proofness is satisfied under this condition.

\section{proof of theorem 6}
\label{proof of theorem 6}
From the resource constraints in Eq.(\ref{resource availability}), we can get $\sum_{i=1}^M \gamma_{ik} x_i T(t_i) l_i = 1$ for any resource $k$ if $K=M$ constraints are tight. Also, because of the equal weighted dominant share under this scenario, we let $\mu_i x_i T(t_i) l_i = H$, where $H$ is a constant. Thus, we can get
\begin{equation}
\label{e50}
    \sum_{i=1}\frac{\gamma_{ik}}{\mu_i} = \frac{1}{H}.
\end{equation}
We give the 2 data analysts counter example in Fig.\ref{counter-example for K=M}.

\section{proof of theorem 7}
\label{proof of theorem 7}
From the resource constraints in Eq.(\ref{resource availability}) and $\mathbf{x} \ge 1$, we can get the solution space is a convex hull bounded by these constraints. Also, because of the equal dominant share, The optimal solution either intersects a convex hull on a surface or is relative to a point on the axis in order to maximize the efficiency. Thus, we can get at least $M-K$ data analysts are allocated no weighted dominant share. We give the 2 data analysts counter example in Fig.\ref{counter-example for K<M}.

\end{appendices}

\clearpage


\begin{thebibliography}{10}
\providecommand{\url}[1]{#1}
\csname url@samestyle\endcsname
\providecommand{\newblock}{\relax}
\providecommand{\bibinfo}[2]{#2}
\providecommand{\BIBentrySTDinterwordspacing}{\spaceskip=0pt\relax}
\providecommand{\BIBentryALTinterwordstretchfactor}{4}
\providecommand{\BIBentryALTinterwordspacing}{\spaceskip=\fontdimen2\font plus
\BIBentryALTinterwordstretchfactor\fontdimen3\font minus
  \fontdimen4\font\relax}
\providecommand{\BIBforeignlanguage}[2]{{%
\expandafter\ifx\csname l@#1\endcsname\relax
\typeout{** WARNING: IEEEtran.bst: No hyphenation pattern has been}%
\typeout{** loaded for the language `#1'. Using the pattern for}%
\typeout{** the default language instead.}%
\else
\language=\csname l@#1\endcsname
\fi
#2}}
\providecommand{\BIBdecl}{\relax}
\BIBdecl

\bibitem{GDPR_2018EU}
\BIBentryALTinterwordspacing
{European Commission}, ``General data protection regulation,'' May 2018.
  [Online]. Available: \url{https://gdpr-info.eu/}
\BIBentrySTDinterwordspacing

\bibitem{chen2019federated}
M.~Chen, R.~Mathews, T.~Ouyang, and F.~Beaufays, ``Federated learning of
  out-of-vocabulary words,'' \emph{arXiv}, Apr. 2019.

\bibitem{ramaswamy2019federated}
S.~Ramaswamy, R.~Mathews, K.~Rao, and F.~Beaufays, ``Federated learning for
  emoji prediction in a mobile keyboard,'' \emph{arXiv}, Jun. 2019.

\bibitem{tim2018applied}
T.~Yang, G.~Andrew, H.~Eichner, H.~Sun, W.~Li, N.~Kong, D.~Ramage, and
  F.~Beaufays, ``Applied federated learning: Improving google keyboard query
  suggestions,'' \emph{arXiv}, Jan. 2018.

\bibitem{leroy2019federated}
D.~Leroy, A.~Coucke, T.~Lavril, T.~Gisselbrecht, and J.~Dureau, ``Federated
  learning for keyword spotting,'' in \emph{Proc. {IEEE} Int. Conf. Acoust.,
  Speech Signal Process. (ICASSP)}, May 2019, pp. 6341--6345.

\bibitem{de2019federated}
\BIBentryALTinterwordspacing
W.~De~Brouwer, ``The federated future is ready for shipping,'' Mar. 2019.
  [Online]. Available:
  \url{https://medium.com/@_doc_ai/the-federated-future-is-ready-for-shipping-d17ff40f43e3}
\BIBentrySTDinterwordspacing

\bibitem{he2020fedml}
C.~He, S.~Li, J.~So, M.~Zhang, H.~Wang, X.~Wang, P.~Vepakomma, A.~Singh,
  H.~Qiu, L.~Shen, P.~Zhao, Y.~Kang, Y.~Liu, R.~Raskar, Q.~Yang, M.~Annavaram,
  and S.~Avestimehr, ``Fedml: {A} research library and benchmark for federated
  machine learning,'' \emph{arXiv}, Jun. 2020.

\bibitem{FATE2019}
\BIBentryALTinterwordspacing
{The FATE Authors}, ``Federated ai technology enabler,'' 2019. [Online].
  Available: \url{https://www.fedai.org/}
\BIBentrySTDinterwordspacing

\bibitem{FLaaS2020Nicolas}
N.~Kourtellis, K.~Katevas, and D.~Perino, ``Flaas: Federated learning as a
  service,'' in \emph{Proc. Workshop Distrib. Mach. Learn.}, Dec. 2020, pp.
  7--13.

\bibitem{abadi2016deep}
M.~Abadi, A.~Chu, I.~Goodfellow, H.~B. McMahan, I.~Mironov, K.~Talwar, and
  L.~Zhang, ``Deep learning with differential privacy,'' in \emph{Proc. ACM
  SIGSAC conf. comput. commun. secur.}, Oct. 2016, pp. 308--318.

\bibitem{wei2020federated}
K.~Wei, J.~Li, M.~Ding, C.~Ma, H.~H. Yang, F.~Farokhi, S.~Jin, T.~Q.~S. Quek,
  and H.~V. Poor, ``Federated learning with differential privacy: Algorithms
  and performance analysis,'' \emph{{IEEE} Trans. Inf. Forensics Secur.},
  vol.~15, no.~1, pp. 3454--3469, Jun. 2020.

\bibitem{Sun21LDP}
L.~Sun, J.~Qian, and X.~Chen, ``{LDP-FL:} practical private aggregation in
  federated learning with local differential privacy,'' in \emph{Proc. Int.
  Joint Conf. Artif. Intell. (IJCAI), 19-27}, Aug. 2021, pp. 1571--1578.

\bibitem{ghodsi2011dominant}
A.~Ghodsi, M.~Zaharia, B.~Hindman, A.~Konwinski, S.~Shenker, and I.~Stoica,
  ``Dominant resource fairness: Fair allocation of multiple resource types,''
  in \emph{Proc. {USENIX} Symp. Netw. Syst. Des. Implementation (NSDI)}, Mar.
  2011.

\bibitem{li2022dplanner}
W.~Li, L.~Xiang, B.~Guo, Z.~Li, and X.~Wang, ``Dplanner: {A} privacy budgeting
  system for utility,'' \emph{{IEEE} Trans. Inf. Forensics Secur.}, vol.~18,
  no.~1, pp. 1196--1210, Feb. 2023.

\bibitem{tholoniat2022packing}
P.~Tholoniat, K.~Kostopoulou, M.~Chowdhury, A.~Cidon, R.~Geambasu,
  M.~L{\'{e}}cuyer, and J.~Yang, ``Packing privacy budget efficiently,''
  \emph{arXiv}, Jan. 2022.

\bibitem{yuan2022Privacyas}
J.~Yuan, S.~Wang, S.~Wang, Y.~Li, X.~Ma, A.~Zhou, and M.~Xu, ``Privacy as a
  resource in differentially private federated learning,'' in \emph{Proc. IEEE
  Int. Conf. Comp. Commun. (INFOCOM)}, May 2023.

\bibitem{kuchler2023cohere}
N.~K{\"{u}}chler, E.~Opel, H.~Lycklama, A.~Viand, and A.~Hithnawi, ``Cohere:
  Privacy management in large scale systems,'' \emph{arXiv}, Jan. 2023.

\bibitem{luo2021privacy}
T.~Luo, M.~Pan, P.~Tholoniat, A.~Cidon, R.~Geambasu, and M.~L{\'{e}}cuyer,
  ``Privacy budget scheduling,'' in \emph{Proc. {USENIX} Symp. Oper. Syst. Des.
  Implementation (OSDI)}, Aug. 2021, pp. 55--74.

\bibitem{dinh2020federated}
C.~T. Dinh, N.~H. Tran, M.~N.~H. Nguyen, C.~S. Hong, W.~Bao, A.~Y. Zomaya, and
  V.~Gramoli, ``Federated learning over wireless networks: Convergence analysis
  and resource allocation,'' \emph{{IEEE/ACM} Trans. Netw.}, vol.~29, no.~1,
  pp. 398--409, Oct. 2021.

\bibitem{shi2020joint}
W.~Shi, S.~Zhou, Z.~Niu, M.~Jiang, and L.~Geng, ``Joint device scheduling and
  resource allocation for latency constrained wireless federated learning,''
  \emph{{IEEE} Trans. Wirel. Commun.}, vol.~20, no.~1, pp. 453--467, Dec. 2021.

\bibitem{li2019fair}
T.~Li, M.~Sanjabi, A.~Beirami, and V.~Smith, ``Fair resource allocation in
  federated learning,'' in \emph{Proc. Int. Conf. Learn. Representations
  (ICLR)}, Apr. 2020, pp. 1--27.

\bibitem{lim2021decentralized}
W.~Y.~B. Lim, J.~S. Ng, Z.~Xiong, J.~Jin, Y.~Zhang, D.~Niyato, C.~Leung, and
  C.~Miao, ``Decentralized edge intelligence: {A} dynamic resource allocation
  framework for hierarchical federated learning,'' \emph{{IEEE} Trans. Parallel
  Distributed Syst.}, vol.~33, no.~3, pp. 536--550, Sep. 2022.

\bibitem{RENYI}
A.~M. Girgis, D.~Data, and S.~N. Diggavi, ``Renyi differential privacy of the
  subsampled shuffle model in distributed learning,'' in \emph{Proc. Adv.
  Neural Inf. Process. Syst. (NeurIPS)}, Dec. 2021, pp. 29\,181--29\,192.

\bibitem{mironov2017renyi}
I.~Mironov, ``R{\'{e}}nyi differential privacy,'' in \emph{Proc. {IEEE} Comput.
  Secur. Found. Symp. (CSF)}, Aug. 2017, pp. 263--275.

\bibitem{uchida2009information}
M.~Uchida and J.~Kurose, ``An information-theoretic characterization of
  weighted alpha-proportional fairness,'' in \emph{Proc. IEEE Int. Conf. Comp.
  Commun. (INFOCOM)}, Apr. 2009, pp. 1053--1061.

\bibitem{lan2010axiomatic}
T.~Lan, D.~T.~H. Kao, M.~Chiang, and A.~Sabharwal, ``An axiomatic theory of
  fairness in network resource allocation,'' in \emph{Proc. IEEE Int. Conf.
  Comp. Commun. (INFOCOM)}, Mar. 2010, pp. 1343--1351.

\bibitem{joe2013multiresource}
C.~Joe-Wong, S.~Sen, T.~Lan, and M.~Chiang, ``Multiresource allocation:
  Fairness--efficiency tradeoffs in a unifying framework,'' \emph{IEEE/ACM
  Trans. Netw.}, vol.~21, no.~6, pp. 1785--1798, May 2013.

\bibitem{shokri2017membership}
R.~Shokri, M.~Stronati, C.~Song, and V.~Shmatikov, ``Membership inference
  attacks against machine learning models,'' in \emph{Proc. IEEE Symp. Secur.
  Privacy (S\&P)}, Jun. 2017, pp. 3--18.

\bibitem{carlini2019secret}
N.~Carlini, C.~Liu, {\'U}.~Erlingsson, J.~Kos, and D.~Song, ``The secret
  sharer: Evaluating and testing unintended memorization in neural networks,''
  in \emph{Proc. USENIX Secur. Symp. (USENIX Security)}, Aug. 2019, pp.
  267--284.

\bibitem{lai2021oort}
F.~Lai, X.~Zhu, H.~V. Madhyastha, and M.~Chowdhury, ``Oort: Efficient federated
  learning via guided participant selection,'' in \emph{Proc. {USENIX} Symp.
  Oper. Syst. Des. Imple., (OSDI)}, Jul. 2021, pp. 19--35.

\bibitem{Generalized2008Eitan}
E.~Altman, K.~Avrachenkov, and A.~Garnaev, ``Generalized a-fair resource
  allocation in wireless networks,'' in \emph{Proc. Conf. Decis. Control
  (CDC)}, Dec. 2008, pp. 2414--2419.

\bibitem{gurobi2022gurobi}
\BIBentryALTinterwordspacing
{Gurobi Optimization, LLC}, ``Gurobi optimizer reference manual,'' 2022.
  [Online]. Available:
  \url{https://www.gurobi.com/documentation/current/refman/index.html}
\BIBentrySTDinterwordspacing

\bibitem{tang2006counter}
A.~Tang, J.~Wang, and S.~H. Low, ``Counter-intuitive throughput behaviors in
  networks under end-to-end control,'' \emph{{IEEE/ACM} Trans. Netw.}, vol.~14,
  no.~2, pp. 355--368, Jul. 2006.

\bibitem{renyi1961measures}
A.~R{\'e}nyi, ``On measures of entropy and information,'' in \emph{Proc.
  Berkeley Symp. Math. Statist. Probability, Volume 1: Contributions to the
  Theory of Statistics}, vol.~4, 1961, pp. 547--562.

\bibitem{menezes1970theory}
C.~F. Menezes and D.~L. Hanson, ``On the theory of risk aversion,'' \emph{Int.
  Econ. Rev.}, vol.~11, no.~3, pp. 481--487, 1970.

\end{thebibliography}
\end{document}